\lstdefinelanguage{Cext}[]{C}{morekeywords={boolean,true,false,assume,assert}}
\newtheorem{theorem}{Theorem}
\newtheorem{lemma}[theorem]{Lemma}
\title{Automatic Modular Abstractions for Template Numerical Constraints}
\date{May 26, 2010}
\author{David Monniaux}
\address{CNRS / VERIMAG\\Centre \'Equation\\2, avenue de Vignate\\38610 Gi\`eres cedex\\France}
\email{David.Monniaux@imag.fr}
\thanks{VERIMAG is a joint research laboratory of CNRS, Universit\'e Joseph Fourier and Grenoble-INP}{
\thanks{\parbox{2em}{\includegraphics[width=1.7em]{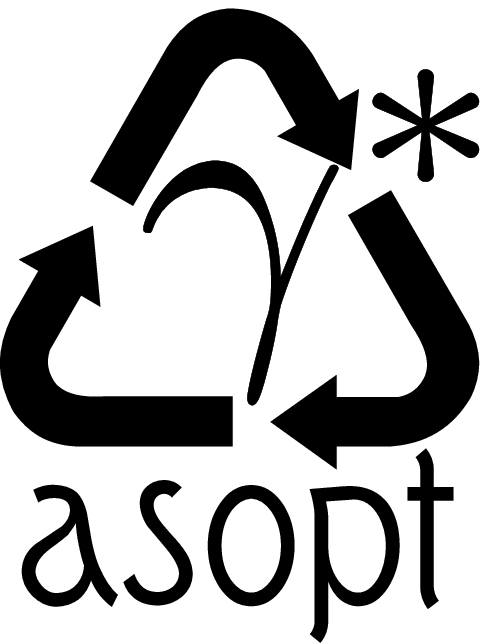}} This work was partially funded by the ``ASOPT'' project of the \emph{Agence nationale de la recherche} (ANR)}

\newcommand{\true}{\textrm{true}}
\newcommand{\false}{\textrm{false}}

\newcommand{\eabs}{\varepsilon_{\text{abs}}}
\newcommand{\elast}{\varepsilon_{\text{last}}}
\newcommand{\erel}{\varepsilon_{\text{rel}}}

\newcommand{\bbQ}{\mathbb{Q}}

\newcommand{\bbZ}{\mathbb{Z}}
\newcommand{\bbR}{\mathbb{R}}

\newcommand{\parts}[1]{\mathcal{P}(#1)}
\newcommand{\sem}[1]{\llbracket #1 \rrbracket}
\newcommand{\definedAs}{\stackrel{\vartriangle}{=}}
\newcommand{\lfp}{\mathop{\text{lfp}}}

\newcommand{\abstr}[1]{#1^\sharp}

\makeatletter
\let\@ORGmakecaption\@makecaption
\long\def\@makecaption#1#2{\@ORGmakecaption{#1}{#2}\vskip\belowcaptionskip\relax}
\makeatother

\begin{document}
\begin{abstract}
We propose a method for automatically generating abstract transformers for static analysis by abstract interpretation. The method focuses on linear constraints on programs operating on rational, real or floating-point variables and containing linear assignments and tests. 
Given the specification of an abstract domain, and a program block, our method automatically outputs an implementation of the corresponding abstract transformer. It is thus a form of program transformation.

In addition to loop-free code, the same method also applies for obtaining least fixed points as functions of the precondition, which permits the analysis of loops and recursive functions.

The motivation of our work is data-flow synchronous programming languages, used for building control-command embedded systems, but it also applies to imperative and functional programming.

Our algorithms are based on quantifier elimination and symbolic manipulation techniques over linear arithmetic formulas. We also give less general results for nonlinear constraints and nonlinear program constructs. 
\end{abstract}

\maketitle

\section{Introduction}
Program analysis consists in deriving properties of the possible executions of a program from an algorithmic processing of its source or object code. Example of interesting properties include: ``the program always terminates''; ``the program never executes a division by zero''; ``the program always outputs a well-formed XML document''; ``variable \lstinline|x| always lies between 1 and 3''. There has been a considerable amount of work done since the late 1970s on \emph{sound} methods of program analysis --- that is, methods that produce results that are guaranteed to hold for all program executions, as opposed to \emph{bug finding} methods such as program testing, which cannot provide such guarantees in general.

Static analysis by \emph{abstract interpretation} is one of the various approaches to sound program analysis. Grossly speaking, abstract interpretation casts the problem of obtaining supersets of the set of reachable states of programs into a problem of finding fixed points of certain monotone operators on certain ordered sets, known as \emph{abstract domains}. When dealing with programs operating on arithmetic values (integer or real numbers, or, more realistically, bounded integers and floating-point values), these sets are often defined by numerical constraints, and ordered by inclusion. One may, for instance, attempt to compute, for each program point and each variable, an interval that is guaranteed to contain all possible values of that variable at that point. The problem is of course how to compute these fixed points. Obviously, the smaller the intervals, the better, so we would like to compute them as small as possible. Ideally, we would like to compute the least fixed point, that is, the least inductive invariant that can be expressed using intervals.

The purpose of this article is to expose how to compute such least fixed points exactly, at least for certain classes of programs and certain abstract domains. Specifically, we consider programs operating over real variables using only linear comparisons (e.g. $x + 2y \leq 3$ but not $x^2 \leq y$), and abstract domains defined using a finite number of linear constraints $\sum_i a_i v_i \leq C$, where the $a_i$ are fixed coefficients, $C$ is a parameter (whose computation is the goal of the analysis) and the $v_i$ are the program variables. Such domains evidently include the intervals, where the constraints are of the form $v_i \leq C$ and $-v_i \leq C$.

Not only can we compute such least fixed points exactly if all parameters are known, but we can also deal with the case where some of the parameters are unknowns, in which case we obtain the parameters of the least fixed point as explicit, algorithmic, functions of the unknowns. We can thus generate, once and for all, the \emph{abstract transformers} for blocks of code: that is, those that map the parameters in the precondition of the block of code to a suitable postcondition. For instance, in the case of interval analysis, we derive explicit functions mapping the bounds on variables at the entrance of a loop-free block to the tightest bounds at the outcome of the block, or to the least inductive interval invariant of a loop. This allows \emph{modular analysis}: it is possible to analyse functions or other kind of program modules (such as nodes in synchronous programming) in isolation of the rest of the code.

A crucial difference with other methods, even on loop-free code, is that we derive the optimal --- that is, the most precise --- abstract transformer for the whole sequence. In contrast, most static analyses only have optimal transformers for individual instructions; they build transformers for whole sequences of code by composition of the transformers for the individual instructions. Even on very simple examples, the optimal transformer for a sequence is not the composition of the optimal transformers of the individual instructions. Furthermore, most other methods are forced to merge the information from different execution traces at the end of ``if-then-else'' and other control flow constructs in a way that loses precision. In contrast, our method distinguishes different paths in the control flow graph as long as they do not go through loop iteration.

Our approach is based on \emph{quantifier elimination}, a technique operating on logical formulas that transforms a formula with quantifiers into an equivalent quantifier-free formula: for instance, $\forall x~(x \geq y \Rightarrow x \geq 1)$ is turned into the equivalent $y \geq 1$. Essentially, we define the result that we would like to obtain using a quantified logical formula, and, using quantifier elimination and further formula manipulations, we obtain an algorithm that computes this result. Thus, one can see our method as automatically transforming a non executable \emph{specification} of the result of the analysis into an algorithm computing that result.
\medskip

In \S\ref{part:background}, we shall provide background about abstract interpretation and quantifier elimination. In \S\ref{part:template_linear_constraints_abstraction}, we shall provide the main results of the article, that is, how to derive optimal abstract transformers for template linear constraint domains. In \S\ref{part:extensions}, we shall investigate various extensions to that framework, still based on quantifier elimination in linear real arithmetic; e.g. how to deal with floating-point values. In \S\ref{part:control-flow}, we shall explain how to move from the single loop case to more complex control flow. In \S\ref{part:experiments}, we shall describe our implementation of the main algorithm and some limited extensions. In \S\ref{part:other_numerical_domains}, we shall investigate extensions of the same framework using quantifier elimination on other theories, namely Presburger arithmetic and the theory of real closed fields (nonlinear real arithmetic). In \S\ref{part:related}, we shall list some related work and compare our method to other relevant approaches. Finally, \S\ref{part:conclusion} concludes.

\emph{This article is based upon two conference articles \cite{Monniaux_POPL09,Monniaux_SAS07}.}

\section{Background}
\label{part:background}
In this section, we shall recall a few definitions, notations and results on program analysis by abstract interpretation, then quantifier elimination.

\subsection{Abstract interpretation}
\label{part:absint}

It is well-known that, in the general case, fully automatic program analysis is impossible for any nontrivial property.%
\footnote{This result, formally given within the framework of recursive function theory, is known as Rice's theorem~\citep[p.~34]{Rogers}\citep[corollary~B]{Rice_1953}. It is obtained by generalisation from Turing's halting theorem. Interpreted upon program semantics, the theorem states that the only properties of the denotational semantics of programs that can be algorithmically decided on the source code are the trivial properties: uniformly ``true'' or uniformly ``false''.}
Thus, all analysis methods must have at least one of the following characteristics:
\begin{itemize}
\item They may bound the memory size of the studied program, which then becomes a finite automaton, on which most properties are decidable. \emph{Explicit-state model-checking} works by enumerating all reachable states, which is tractable only  while \emph{implicit state model-checking} represents sets of states using data structures such as binary decision diagrams~\citep{ClarkeGrumbergPeled99}.
\item They may restrict the programming language used, making it not Turing-complete, so that properties become decidable. For instance, reachability in pushdown automata is decidable even though their memory size is unbounded~\citep{DBLP:conf/concur/BouajjaniEM97}.
\item They may restrict the class of properties expressed to properties of bounded executions; e.g., ``within the first 10000 steps of execution, there is no division by zero'', as in \emph{bounded model checking}~\citep{DBLP:journals/ac/BiereCCSZ03}.
\item They may be \emph{unsound} as proof methods: they may fail to detect that the desired property is violated. Typically, \emph{bug-finding} and testing programs are in that category, because they may fail to detect a bug. Some such analysis techniques are not based on program semantics, but rather on finding patterns in the program syntax~\citep{ASTREE_TASE07}.
\item They may be \emph{incomplete} as proof methods: they may fail to prove that a certain property holds, and report spurious violations. Methods based on abstraction fall in that category.
\end{itemize}

\emph{Abstraction} by over-approximation consists in replacing the original problem, undecidable or very difficult to decide, by a simpler ``abstract'' problem whose behaviours are guaranteed to include all behaviours of the original problem.

\lstset{language=Cext}
\begin{figure}\noindent%
\begin{minipage}{0.45\textwidth}
\begin{lstlisting}[caption={Original program}]
int x, y;
bool a;
if (x < y) a = false;
\end{lstlisting}
\end{minipage}\hfill
\begin{minipage}{0.45\textwidth}
\begin{lstlisting}[caption={Boolean program}]
bool a;
if (nondet()) a = false;
\end{lstlisting}
\end{minipage}

\caption{Transformation of a program into a Boolean program by erasing the numeric part and replacing tests over numerical quantities by nondeterministic choice (\lstinline|nondet()| nondeterministically returns true or false).}
\label{fig:Boolean_transform}
\end{figure}
\bigskip

An example of an abstraction is to erase from the program all constructs dealing with numerical and pointer types (or replacing them with nondeterministic choices, if their value is used within a test), keeping only Boolean types (Fig.~\ref{fig:Boolean_transform}). Obviously, the behaviours of the resulting program encompass all the behaviours of the original program, plus maybe some extra ones.

Further abstraction can be applied to this Boolean program: for instance, the ``3-value logic'' abstraction~\citep{DBLP:conf/cav/RepsSW04}
which maps any input or output variable to an abstract parameter taking its value in a 3-element set: ``is 0'', ``is 1'', ``can be 0 or 1'';%
\footnote{For brevity, we identify ``false'' with 0 and ``true'' with 1.}
for practical purposes it may be easier to encode these values using a couple of Booleans, respectively meaning ``can be 0'' and ``can be 1'', thus the abstract values $(1,0)$, $(0,1)$ and $(1,1)$. The abstract value $(0,0)$ obtained for any variable at a program point means that this program point is unreachable. Given a vector of input abstract parameters, one for each input variable of the program, the \emph{forward abstract transfer function} gives a correct vector of output abstract parameters, one for each output variable of the program. Quite obviously, in the absence of loops, it is possible to obtain a suitable forward abstract transfer function by applying simple logical rules to the Boolean function defined by the Boolean program. An effective implementation of the forward abstract transfer function can thus be obtained by a transformation of the source program.
\bigskip

For programs operating over numerical quantities, a common abstraction is \emph{intervals}.~\citep{CousotCousot76-1,Cousot_PhD} To each input $x$, one associates an interval $[x_{\min},x_{\max}]$, to each output $x'$ an interval $[x'_{\min},x'_{\max}]$. How can one compute the $(x'_{\min},x'_{\max})_{x \in V}$ bounds from the $(x_{\min},x_{\max})_{x \in V}$? The most common method is \emph{interval arithmetic}: to each elementary arithmetic operation, one attaches an abstract counterpart that gives bounds on the output of the operation given bounds on the inputs. For instance, if one knows $[a_{\min},a_{\max}]$ and $[b_{\min},b_{\max}]$, and $c=a+b$, then one computes $[c_{\min},c_{\max}]$ as follows: $c_{\min}=a_{\min}+b_{\min}$ and $c_{\max}=a_{\max}+b_{\max}$. If a program point can be reached by several paths (e.g. at the end of an if-then-else construct), then a suitable interval $[x_{\min},x_{\max}]$ can be obtained by a \emph{join} of all the intervals for $x$ at the end of these paths: $[a,b] \sqcap [c,d] = [\min(a,c),\max(b,d)]$.
Again, for a loop-free program, one can obtain a suitable effective forward abstract transfer function by a program transformation of the source code. 

\begin{figure}
\begin{minipage}{0.45\textwidth}
\begin{lstlisting}[mathescape=true,caption={Program computing zero}]
double x, y, z;
/* x lies in $[x_{\min},x_{\max}]$} */
y = x;
z = x-y;
\end{lstlisting}
\end{minipage}
\hfill
\begin{minipage}{0.45\textwidth}
\begin{lstlisting}[mathescape=true,caption={Interval transformer}]
$y_{\min} = x_{\min}$;
$y_{\max} = x_{\max}$;
$z_{\min} = x_{\min} - y_{\max}$;
$z_{\max} = x_{\max} - y_{\min}$;
\end{lstlisting}
\end{minipage}

\caption{The transformer for the interval domain obtained by composition of locally optimal abstract transformers is imprecise. For each statement (on the left) we use a corresponding optimal transformer (on the right), but the composition of these transformers is not optimal. For the sake of simplicity, all variables are considered to be real numbers.}
\label{fig:imprecise_interval}
\end{figure}

The abstract transfer function defined by interval arithmetic is always correct, but is not necessarily the most precise. For instance, on example in Fig.~\ref{fig:imprecise_interval},
the best abstract transfer function maps any input range to $z_{\min}=z_{\max}=0$, since the output $z$ is always zero, while the one obtained by applying interval arithmetic to all program statements yields, in general, larger intervals.
The weakness of the interval domain on this example is evidently due to the fact that it does not keep track of relationships between variables (here, that $x=y$). \emph{Relational abstract domains} such as the \emph{octagons} \citep{mine:hosc06,Mine_AST_WCRE01,Mine_PhD} or \emph{convex polyhedra} \citep{CousotHalbwachs78,Halbwachs_PhD,PPL} address this issue. Yet, neither octagons nor polyhedra provide analyses that are guaranteed to give optimal results.

Consider the following program:
\begin{lstlisting}[caption={Undistinguished paths}]
int x;
if (x > 0) x= 1; else x= -1;
if (x == 0) x= 2;
\end{lstlisting}
Obviously, the second test can never be taken, since $x$ can only be $\pm 1$; however an interval, octagon or polyhedral analysis will conclude, after joining the informations for both branches of the first test, that $x$ lies in $[-1,1]$ and will not be able to exclude the case $x = 0$. The final invariant will therefore be $x \in [-1,2]$.

In contrast, our method, applied to this example, will correctly conclude that the optimal output invariant for $x$ is $[-1,1]$. In fact, our method yields the same result as considering the (potentially exponentially large) set of paths between the beginning and the end of the program, and for each path, computing the least output interval, then computing the join of all these intervals.

\begin{figure}
\begin{center}
\input{widening_polyhedra.pstex_t}
\end{center}

\caption{The standard widening on convex polyhedra \cite{Halbwachs_PhD,CousotHalbwachs78}, here demonstrated on polyhedra in dimension 2 (polygons). The widening operator observes the sets of reachable states $P_0$ and $P_1$ at two consecutive iterations, and keeps only the constraints (polyhedral faces, here polygon edges) that are stable across iterations. The resulting $P_\infty$ polyhedron is then proposed as an invariant.}
\label{fig:widening_polyhedra}
\end{figure}

We have so far left out programs containing loops; when programs contain loops or recursive functions, a central problem of program analysis is to find \emph{inductive invariants}. In the case of Boolean programs, given constraints on the input parameters, the set of reachable states can be computed exactly by model-checking algorithms; yet, these algorithms do not give a closed-form representation of the abstract transfer function mapping input parameters to output parameters for the 3-value abstraction. In the case of numerical abstractions such as the intervals, octagons or polyhedra, the most common way to find invariants is through the use of a \emph{widening operator}~\citep{CousotCousot76-1,CousotCousot_JLC92}.

Intuitively, widening operators observe the sets of reachable states after $N$ and $N+1$ loop iterations and extrapolate them to a ``candidate invariant''. For instance, the widening operator, observing a sequence of intervals $[0,1]$, $[0,2]$, $[0,3]$ may wish to try $[0,+\infty)$. See Fig.~\ref{fig:widening_polyhedra} for an example with the standard widening operator on convex polyhedra.

Let $u_0$ be the set of initial states of a loop, and let $\rightarrow_\tau$ be transition relation for this loop ($\sigma \rightarrow_\tau \sigma'$ means that $\sigma'$ is reachable in one loop step from $\sigma$). The set of reachable states at the head of the loop is the least fixed point of $f: u \mapsto u \cup \{ \sigma' \mid \exists \sigma \in u \land \sigma \rightarrow_\tau \sigma' \}$, which is obtained as the limit of the ascending sequence defined by $u_{n+1} = f(u_n)$. By abstract interpretation, we replace this sequence by an abstract sequence $\abstr{u}_n$ defined by $\abstr{u}_{n+1} = \abstr{f}(\abstr{u}_n)$, such that for any $n$, $\abstr{u}_n$ is an abstraction of $u_n$. If this sequence is stationary, that is, $\abstr{u}_{N+1}=\abstr{u}_N$ for some~$N$, then $\abstr{u}_N$ is an abstraction of the least fixed point of $f$ and thus of the least invariant of the transition relation $\tau$ containing~$u_0$.

When one uses a widening operator $\triangledown$, the function $\abstr{f}(\abstr{u}_n)$ is defined as  $\abstr{u}_n \triangledown \abstr{f}_p(\abstr{u}_n)$ where $\abstr{f}_p$ is an abstraction of $f$. The design of the widening operator ensures convergence of $\abstr{u}_n$ in finite time.
The exceptions to the use of widening operators are static analysis domains that satisfy the \emph{ascending condition}, such as the domain of linear equality constraints \citep{Karr76} and that of linear congruence constraints~\citep{Granger91}: with $\abstr{f}=\abstr{f}_p$ the sequence $\abstr{u}_n$ always becomes stable within finite time.

Again, widening operators provide correct results, but these results can be grossly over-approximated. Much of the literature on applied analysis by abstract interpretation discusses workarounds that give precise results in certain cases: \emph{narrowing} iterations \citep{CousotCousot76-1,Cousot_PhD}, widening ``up to'' \citep[\S3.2]{Halbwachs_CAV93}, ``delayed'' or with ``thresholds'' \citep{ASTREE_PLDI03}, etc.

A simple example of a program with one single variable where narrowing iterations fail to improve precision is Listing~\ref{ex:circularbuffer}.
\lstset{language=Cext}
\begin{lstlisting}[float,label=ex:circularbuffer,caption={Circular buffer}]
i = 0;
while (true) {
  if (nondet()) {
    i = i+1;
    if (i >= 10) i=0;
  }
}
\end{lstlisting}
This program is a much simplified version of a piece of code maintaining a circular buffer inside a large reactive control loop, where some piece of data is inserted only at certain loop iterations. We only kept the instructions relevant to the array index \lstinline|i| and abstracted away the choice of the loop iterations where data is inserted as nondeterministic \lstinline|nondet()|.

If we analyse this loop using intervals with the standard widening with a widening point at the head of the loop, we obtain the sequence $[0,0]$, $[0,1]$, $[0,2]$ and then widening to $[0,+\infty)$. Narrowing iterations then fail to increase the precision. The reason is that the transition relation for this loop includes the identity function (when \lstinline|nondet()| is false), thus the concrete function whose least fixed point defines the set of reachable states \citep{CousotCousot_JLC92} satisfies $X \subseteq \phi(X)$ for all~$X$ (in other word, $\phi$ is \emph{expansive}). Thus, once the widening operator overshoots the least fixed point, it can never recover it.

A similar problem is posed by:
\begin{lstlisting}
i = 0;
while (true) {
  i = i+1;
  if (i == 10) i=0;
}
\end{lstlisting}
The usual widening iterations overshoot to $[0,+\infty)$, and narrowing does not recover from there. Furthermore, this example illustrates how widening makes analysis \emph{non monotonic}: contrary to one could expect, having extra precision on the precondition of a program can result in worse precision for the inferred invariants. For instance, consider the above problem and replace the first line by \lstinline|assume(i>=0 && i<=9)|. Clearly, the resulting program is an abstraction of the above example, since it has strictly more behaviours (we allow $1,\dots,9$ as initial values for \lstinline|i|). Yet, the analysis of the loop will yield a more precise behaviour: the interval $[0,9]$ is stable and the analysis terminates immediately.

Both these very simple examples can be precisely analysed using \emph{widening up to}\citep[\S3.2]{Halbwachs_CAV93}, also known as \emph{widening with thresholds}~\cite[Sec.~7.1.2]{ASTREE_PLDI03}: a preliminary phase collects all constants to which \lstinline|i| is compared, and instead of widening to $+\infty$, one widens to the next larger such constant if it exists --- in this case, since \lstinline|x < 10| stands for $x \leq 9$, widening with threshold would widen to~$9$, which is the correct value. This approach is not general --- it fails if instead of the constant 10 we have some computed value. Of course, improvements are possible: for instance, one could analyse all the program up to this loop in order to prove that certain variables are constant, then use this information for setting thresholds for further loops. Yet, again, this is not a general approach.

This is the second problem that this article addresses: how to obtain, in general, optimal invariants for certain classes of programs and numerical constraints. Furthermore, our methods provide these invariants as functions of the parameters of the precondition of the loop; this is one difference with our proposal, which computes the best, thus, again, they provide effective, optimal abstract transfer functions for loop constructs.

\subsection{Quantifier elimination}
\label{part:qe}
Consider a set $A$ of atomic formulas. The set $U(A)$ of \emph{quantifier-free formulas} is the set of formulas constructed from $A$ using operators $\land$, $\lor$ and $\neg$; the set $Q(A)$ of \emph{quantified formulas} is the set of formulas constructed from $A$ using the above operators and the $\exists$ and $\forall$ quantifiers. Such formulas are thus trees whose leaves are the atomic formulas.
A \emph{literal} is an atomic formula or the negation thereof. The set of \emph{free variables} $FV(F)$ of a formula $F$ is defined as usual. A quantifier-free formula without variables is said to be \emph{ground}. A formula without free variables is said to be \emph{closed}; the \emph{existential closure} of a formula $F$ is $F$ with existential quantifiers for all free variables prepended; the \emph{universal closure} is the same with universal quantifiers. A quantifier-free formula is said to be in \emph{disjunctive normal form} (DNF) if it is a disjunction of conjunctions, that is, is of the form $(l_{1,1} \land \dots \land l_{1,n_1}) \lor \dots \lor (l_{m,1} \land \dots \land l_{m,n_m})$, and is said to be in \emph{conjunctive normal form} (CNF) if it is a conjunction of disjunctions. Any quantifier-free formula can be converted into CNF or DNF by application of the distributivity laws $(A \lor B) \land C \equiv (A \land C) \lor (B \land C)$ and $(A \land B) \lor C \equiv (A \lor C) \land (B \lor C)$, though better algorithms exist, such as ALL-SAT modulo theory~\citep{Monniaux_LPAR08}.

\subsubsection{Linear real inequalities}
\label{part:qf_lra}
Let $A$ be the set of linear inequalities with integer or rational coefficients over a set of variables $V$. By elementary calculus, such inequalities can be equivalently written in the following forms: $l(v_1, v_2, \dots) \geq C$ or $l(v_1, v_2, \dots) > C$, with $l$ a linear expression with integer coefficients over~$V$ and $C$ a constant. Let us first consider the theory of \emph{linear real arithmetic} (LRA): models of a formula $F$ are mappings from $F$ to the real field~$\bbR$, and notions of equivalence and satisfiability follow. Note that satisfiability and equivalence are not affected by taking models to be mappings from $F$ to the rational field~$\bbQ$. Deciding whether a LRA formula is satisfiable is, again, a NP-complete problem known as \emph{satisfiability modulo theory} (SMT) of real linear arithmetic. Again, practical implementations, known as SMT-solvers , are capable of dealing with rather large formulas; examples include Yices~\citep{DBLP:conf/cav/DutertreM06} and Z3~\citep{DBLP:conf/tacas/MouraB08}.%
\footnote{The yearly \href{http://www.smtcomp.org/}{SMT-COMP} competition has SMT-solvers compete on a large set of benchmarks. The \href{http://www/smtlib.org}{SMT-LIB} site \citep{SMTLIB} documents various theories amenable to SMT, including large libraries of benchmarks. \citet{Kroening_Strichman_08} is an excellent introductory material to the algorithms behind SMT-solving.}

The theory of linear real arithmetic admits quantifier elimination. For instance, the quantified formula $\forall x~ (x \geq y \implies x \geq 3)$ is equivalent to the quantifier-free formula $y \geq 3$.
Quantified linear real arithmetic formulas are thus decidable: by quantifier elimination, one can convert the existential closure of the formula to an equivalent ground formula, the truth of which is trivially decidable by evaluation.  The decision problem for quantified formulas over rational linear inequalities requires at least exponential time, thus quantifier elimination is at least exponential.~\cite[\S 7.4]{BradleyManna07}. \cite[Th.~3]{Fischer_Rabin_exponential_74} \citet{Weispfenning88} discusses complexity issues in more detail.

Again, most quantifier elimination algorithms proceed by induction over the structure of the formula, and thus begin by eliminating the innermost quantifiers, progressively replacing branches of the formula containing quantifiers by quantifier-free equivalent branches. By application of the equivalence $\forall x~F \equiv \neg \exists x~\neg F$, one can reduce the problem to eliminating existential quantifiers only. Consider now the problem of eliminating the existential quantifiers from $\exists~x_1 \dots x_n~F$ where $F$ is quantifier-free. We can first convert into DNF: $\exists x_1 \dots x_n~(C_1 \lor \dots \lor C_m)$ where the $C_i$ are conjunctions, then to the equivalent formula $(\exists~x_1 \dots x_n~C_1) \lor \dots \lor (\exists~x_1 \dots x_n~C_m)$. We thus have reduced the quantifier elimination problem for general formula to the problem of quantifier elimination for conjunctions of linear inequalities. Remark that, geometrically, this quantifier elimination amounts to computing the projection of a convex polyhedron along the dimensions associated with the variables $x_1, \dots, x_n$, with the original polyhedron and its projection being defined by systems of linear inequalities. The Fourier-Motzkin elimination procedure \citep[\S 5.4]{Kroening_Strichman_08} converts $\exists x_1 \dots x_n~C$ into an equivalent conjunction. This is what we refer to the ``conversion to DNF followed by projection approach''. This approach is good for quickly proving that the theory admits quantifier elimination, but it is very inefficient. We shall now see better methods.

\begin{figure}
\begin{center}
\input{qelim_substitution.pstex_t}
\end{center}

\caption{The gray zone $S$ is the set of $(x,y)$ solutions of formula $F$, whose atoms are the linear inequalities corresponding to the lines $\Delta$ drawn with dashes. For fixed $y=y_0$, the set of $x$ such that $F(x,y)$ is true is made of intervals whose ends lie within the set $I$ of intersections of the $y=y_0$ line with the lines in $\Delta$, drawn with a small circle. $y=y_0$ therefore has an intersection with $S$ if and only if a point in $I$, or an interval with both ends in $I \cup \{-\infty,+\infty\}$, lies within $S$. This condition can be tested using $x \rightarrow \pm \infty$ and all midpoints to intervals with both ends in $I$, as per \citet{FerranteRackoff75}, or, in addition to $I \cup \{ -\infty \}$, for any element of $I$ a point infinitesimally close to the right of it, as per \citet{LoosWeispfenning93}. Both methods exploit the fact that the coordinates of all points from $I$ (intersection of $y=y_0$ and a line from $\Delta$) can be expressed as affine linear functions of~$y_0$.}
\label{fig:qelim_substitution}
\end{figure}

\citet{FerranteRackoff75} proposed a substitution method \citep[\S 7.3.1]{BradleyManna07}\citep[\S 4.2]{NipkowIJCAR08}\citep{Weispfenning88}: a formula of the form $\exists x~F$ where $F$ is quantifier-free is replaced by an equivalent disjunction $F[e_1/x] \lor \dots \lor F[e_n/x]$, where the $e_i$ are affine linear expressions built from the free variables of~$\exists x~F$. Note the similarity to the naive elimination procedure we described for Boolean variables: even though the existential quantifier ranges over an infinite set of values, it is in fact only necessary to test the formula $F$ at a finite number of points (see Fig.~\ref{fig:qelim_substitution}).
The drawback of this algorithm is that $n$ is proportional to the square of the number of occurrences of $x$ in the formula; thus, the size of the formula can be cubed for each quantifier eliminated. \citet{LoosWeispfenning93} proposed a \emph{virtual substitution} algorithm%
\footnote{This method replaces $x$ by a formula that does not evaluate to a rational number, but to a sum of a rational number and optionally an infinitesimal $\varepsilon$, taken to be a number greater than 0 but less than any positive real; the infinitesimals are then erased by application of the rules governing comparisons. In practical implementations, one does both substitution and erasure of infinitesimals in one single pass, and infinitesimals never actually appear in formulas; thus the phrase \emph{virtual substitution}.}~\citep[\S 4.4]{NipkowIJCAR08}
that works along the same general ideas but for which $n$ is proportional to the number of occurrences of $x$ in the formula. Our benchmarks show that \citeauthor{LoosWeispfenning93}'s algorithm is generally much more efficient than \citeauthor{FerranteRackoff75}'s, despite the latter method being better known.~\citep{Monniaux_LPAR08,Monniaux_CAV10}

The main drawback of substitution algorithms is that the size of the formulas generally grows very fast as successive variables are eliminated. There are few opportunities for simplifications, save for replacing inequalities equivalent to $\false$ (e.g. $0 < 0$) by $\false$ and similarly for $\true$, then applying trivial rewrites (e.g. $\false \vee x \rightsquigarrow x$, $\false \land x \rightsquigarrow \false$). Our experience is that these algorithms tend to terminate with out-of-memory~\citep{Monniaux_LPAR08,Monniaux_CAV10}. \citet{DBLP:conf/tacas/SchollDPK09} have proposed using and-inverter graphs (AIGs) and SAT modulo theory (SMT) solving in order to simplify formulas and keeping their size manageable.

Another class of algorithms improve on the conversion to DNF then projection approach, by combining both phases: we proposed an eager algorithm based on this idea \citep{Monniaux_LPAR08}, then a lazy version, which computes parts of formulas as needed~\citep{Monniaux_CAV10}. Instead of syntactic conversion to DNF, we use a SMT solver to point to successive elements of the DNF, and instead of using Fourier-Motzkin elimination, which tends to needlessly blow up the size of the formulas, we use libraries for computations over convex polyhedra, which can compute a minimal constraint representation of the projection of a polyhedron given by constraints (that is, inequalities) --- which is the geometrical counterpart of performing elimination of a block of existentially quantified variables. Experiments have shown that such methods are generally competitive with substitution approaches, with different classes of benchmarks showing a preference for one of the two families of methods~\citep{Monniaux_CAV10}; for the kinds of problems we consider in this article, it seems that the SMT+projection methods are more efficient~\citep{Monniaux_LPAR08}.

\subsubsection{Presburger arithmetic}
\label{part:qf_lia}
The theory of \emph{linear integer arithmetic} (LIA), also known as Presburger arithmetic, has the same syntax for formulas, but another semantics, replacing rational numbers ($\bbQ$) by integers ($\bbZ$). Linear inequalities are then insufficient for quantifier elimination --- we also need congruence constraints: $\exists k~x=2k$ simplifies to $x \equiv 0 \pmod 2$.

Decision of formulas in Presburger arithmetic is doubly exponential, and thus quantifier elimination is very expensive in the worst case \cite{Fischer_Rabin_exponential_74}. \citet{Presburger29} provided a quantifier elimination procedure, but its complexity was impractical; \citet{Cooper72} proposed a better algorithm \citep[\S7.2]{BradleyManna07}; \citet{Pugh_Omega91} proposed the ``Omega test'' \citep[\S5.5]{Kroening_Strichman_08}.

The practical complexity of Presburger arithmetic is high. In particular, the formulas produced tend to be very complex, even when there exists a considerably simpler and ``understandable'' equivalent formula, as seen with experiments in \S\ref{part:presburger_abstraction}.

\citeauthor{Cooper72} and \citeauthor{Pugh_Omega91}'s procedures are very geometrical in nature. Integers, however, can also be seen as words over the $\{0,1\}$ alphabet, and sets of integers can thus be recognised by finite automata \citep[\S 8]{Perrin_HTCS}. Addition is encoded as a 3-track automaton recognising that the number on the third track truly is the sum of the numbers on the first two tracks; equivalently, this encodes subtraction. Existential quantifier elimination just removes some of the tracks, making transitions depending on bits read on that track nondeterministic. Negation is complementation (which can be costly, thus explaining the high cost of quantifier alternation). Multiplication by powers of two is also easily encoded, and multiplications by arbitrary constants can be encoded by a combination of additions and multiplications by powers of two.%
\footnote{This method embeds Presburger arithmetic into a stronger arithmetic theory, represented by the automata, then performs elimination over these automata. This partly explains why it is difficult to recover a Presburger formula from the resulting automaton.}
The same idea can be extended to real numbers written as their binary expansion, using automata on infinite words.

This leads to an interesting arithmetic theory, with two sorts of variables: reals (or rationals) and integers. This could be used to model computer programs, with integers for integer variables and reals for floating-point variables (if necessary by using the semantic transformations described in \S~\ref{part:float}). \citet{Boigelot_et_al_TOCL05} described a restricted class of $\omega$-automata sufficient for quantifier elimination. \citet{LIRA} implemented the \textsc{Lira} tool based on such ideas. Unfortunately, this approach suffers from two major drawbacks: the practical performances are very bad for purely real problems \citep{Monniaux_LPAR08}, and it is impossible to recover an arithmetical formula from almost all these automata. We therefore did not pursue this direction.

\subsubsection{Nonlinear real arithmetic}
\label{part:cad}
What happens if we do not limit ourselves to linear arithmetic, but also allow polynomials? Over the integers, the resulting theory is known as Peano arithmetic. It is well known that there can exist no decision procedure for quantified Peano arithmetic formulas.%
\footnote{One does not need the full language of quantified Peano formulas for the problem to become undecidable. It is known that there exists no algorithm that decides whether a given nonlinear Diophantine equation (a polynomial equation with integer coefficients) has solutions, and that deciding such a problem is equivalent to deciding Turing's halting problem. in other words, it is impossible to decide whether a formula $P(x_1,\dots,x_n)=0 \land x_1 \geq 0 \land \dots \land x_n \geq 0$ is satisfiable over the integers. See the literature on Hilbert's tenth problem, e.g. the book by \citet{Matiyasevich}.}
Since a quantifier elimination algorithm would turn a quantified formula without free variables into an equivalent ground formula, and ground formulas are trivially decidable, it follows that there can exist no quantifier elimination algorithm for this theory.

The situation is wholly different over the real numbers. The satisfiability or equivalence of polynomial formulas does not change whether the models are taken over the real numbers, the real algebraic numbers, or, for the matter, any \emph{real closed field}; this theory is thus known as the theory of real closed fields, or \emph{elementary algebra}. \citet{Tarski51_RAND,Tarski51} and \citet{Seidenberg54} showed that this theory admits quantifier elimination, but their algorithms had impractically high complexity. \citet{CAD_Collins75} introduced a better algorithm based on \emph{cylindrical algebraic decomposition}. For instance, quantifier elimination on $\exists x~ax^2+bx+c=0$ by cylindrical algebraic decomposition yields
\begin{multline}
\left(c<0\land \left(\left(b<0\land a\geq \frac{b^2}{4 c}\right)\lor \
(b=0\land a>0)\lor \left(b>0\land a\geq \frac{b^2}{4 \
c}\right)\right)\right)\lor c=0\lor\\
\left(c>0\land \left(\left(b<0\land \
a\leq \frac{b^2}{4 c}\right)\lor (b=0\land a<0)\lor \left(b>0\land a\leq \
\frac{b^2}{4 c}\right)\right)\right)
\end{multline}
Note the cylindrical decomposition: first, there is a case disjunction according to the values of $c$, then, for each disjunct for $c$, a case disjunction for the value of $b$; more generally, cylindrical algebraic decomposition builds a tree of case disjunctions over a sequence of variables $v_1, v_2, \dots$ , where the guard expressions defining the cases for $v_i$ can only refer to $v_1, \dots, v_i$. This decomposition only depends on the polynomials inside the formula and not on its Boolean structure, and computing it may be very costly even if the final result is simple. This is the intuition why despite various improvements \citep{CAD_1998,Basu_Pollack_Roy_algoreal_2006} the practical complexity of quantifier elimination algorithms for the theory of real closed fields remain high. The theoretical space complexity is doubly exponential \cite{Davenport_Heintz88,Brown_Davenport_ISSAC07}. This is why our results in \S\ref{sec:real_polynomial_abstraction} are of a theoretical rather than practical interest.

Minimal extensions to this formula language may lead to undecidability. This is for instance the case when one adds trigonometric functions: it is possible to define $\pi$ as the least positive zero of the sine, then define the set of integers as the numbers $k$ such that $\sin(k\pi)=0$, and thus one can encode Peano arithmetic formulas into that language~\citep{Anai_Weispfenning_00}. Also, naive restrictions of the language do not lead to lower complexity. For instance, limiting the degree of the polynomials to two does not make the problem simpler, since formulas with polynomials of arbitrary degrees can be encoded as formulas with polynomials of degree at most two, simply by introducing new variables standing for subterms of the original polynomials. For instance, $\exists x~ ax^3+bx^2+cx+d=0$ can be encoded, using Horner's form for the polynomial, as $\exists x \exists y \exists z ~ z= ax+b \land y= zx +c \land yx + d=0$. Certain stronger restrictions may however work; for instance, if the variables to be eliminated occur only linearly, then one can adapt the substitution methods described in~\S\ref{part:qf_lra}.

\section{Optimal Abstraction over Template Linear Constraint Domains}
\label{part:template_linear_constraints_abstraction}

When applying abstract interpretation over domains of linear constraints, such as octagons \cite{mine:hosc06,Mine_PhD,Mine_AST_WCRE01}, one generally applies a widening operator, which may lead to imprecisions. In some cases, \emph{acceleration} techniques leading to precise results can be applied \citep{DBLP:conf/sas/GonnordH06,Gonnord_PhD}: instead of attempting to extrapolate a sequence of iterates to its limit, as does widening, the exact limit is computed. In this section, we describe a class of constraint domains and programs for which abstract transfer functions of loop-free codes and of loops can be exactly computed; thus the \emph{optimality}. Furthermore, the analysis outputs these functions in closed form (as explicit expressions combining linear expressions and functional if-then-else constructs), so the result of the analysis of a program fragment can be stored away for future use; thus the \emph{modularity}. Our algorithms are based on quantifier elimination over the theory of real linear arithmetic~(\S\ref{part:qe}).

\subsection{Template Linear Constraint Domains}
\label{part:template_linear_constraints}
Let $F$ be a formula over linear inequalities. We call $F$ a domain definition formula if the free variables of $F$ split into $n$ \emph{parameters} $p_1,\dots,p_n$ and $m$ \emph{state variables} $s_1,\dots,s_m$. We note $\gamma_F: \bbQ^n \rightarrow \parts{\bbQ^m}$ defined by $\gamma_F(\vec{p}) = \{ \vec{s} \in \bbQ^m \mid (\vec{p}, \vec{s}) \models F \}$. As an example, the interval abstract domain for 3 program variables $s_1, s_2, s_3$ uses 6 parameters $m_1, M_1, m_2, M_2, m_3, M_3$; the formula is $m_1 \leq s_1 \leq M_1 \land m_2 \leq s_2 \leq M_2 \land m_3 \leq s_3 \leq M_3$.

In this section, we focus on the case where $F$ is a conjunction $L_1(s_1, \dots, s_m) \leq p_1 \land \dots \land L_n(s_1, \dots, s_m) \leq p_n$ of linear inequalities whose left-hand side is fixed and the right-hand sides are parameters. Such conjunctions define the class of \emph{template linear constraint domains}~\citep{Sankaranarayana+others/05/Scalable}. Particular examples of abstract domains in this class are:
\begin{itemize}
\item the intervals (for any variable $s$, consider the linear forms $s$ and $-s$); because of the inconvenience of talking intervals of the form $[-a,b]$, we shall often taking them of the form $[x_{\min},x_{\max}]$, with the optimal value for $x_{\min}$ being a greatest lower bound instead of a least upper bound;
\item the difference bound matrices (for any variables $s_1$ and $s_2$, consider the linear form $s_1-s_2$);
\item the octagon abstract domain (for any variables $s_1$ and $s_2$, distinct or not, consider the linear forms $\pm s_1 \pm s_2$)~\cite{Mine_AST_WCRE01}
\item the octahedra (for any tuple of variables $s_1, \dots, s_n$, consider the linear forms $\pm s_1 \dots \pm s_n$).~\cite{Clariso_Cortadella_SAS2004}
\end{itemize}

Remark that $\gamma_F$ is in general not injective, and thus one should distinguish the \emph{syntax} of the values of the abstract domain (the vector of parameters~$\vec{p}$) and their \emph{semantics} $\gamma_F(\vec{p})$. As an example, if one takes $F$ to be $s_1 \leq p_1 \land s_2 \leq p_2 \land s_1+s_2 \leq p_3$, then both $(p_1,p_2,p_3)=(1, 1, 2)$ and $(1, 1, 3)$ define the same set for state variables $s_1$ and $s_2$. If $\vec{u} \leq \vec{v}$ coordinate-wise, then $\gamma_F(\vec{u}) \subseteq \gamma_F(\vec{v})$, but the converse is not true due to the non-uniqueness of the syntactic form.

Take any nonempty set of states $W \subseteq \bbQ^m$. Take for all $i=1,\dots,m$: $p_i = \sup_{\vec{s} \in W} L_i(\vec{s})$. Clearly, $W \subseteq \gamma_F(p_1, \dots, p_m)$, and in fact $\vec{p}$ is such that $\gamma_F(\vec{p})$ is the least solution to this inclusion.
$p_i$ belongs in general to $\bbR \cup \{ +\infty \}$, not necessarily to $\bbQ \cup \{ +\infty \}$. (for instance, if $W = \{ s_1 \mid s_1^2 \leq 2 \}$ and $L_1=s_1$, then $p_1 = \sqrt{2}$). We have therefore defined a map $\alpha_F: \parts{\bbR^m} \rightarrow \{ \bot \} \cup (\bbR \cup \{ +\infty \})^n$, and $(\alpha_F, \gamma_F)$ form a \emph{Galois connection}: $\alpha_F$ maps any set to its best upper-approximation.%
\footnote{See e.g. \citep{CousotCousot_JLC92} for more information on Galois connection and their use in static analysis. Not all abstract interpretation techniques can be expressed within Galois connections. Indeed, there are abstract domains where there is not necessarily a best abstraction of a set of concrete states, e.g. the domain of convex polyhedra, which has no best abstraction for a disc, for instance. In this article, all abstractions, except the non-convex ones of \S~\ref{part:non-convex}, are through Galois connections.}
The fixed points of $\alpha_F \circ \gamma_F$ are the \emph{normal forms}; the normal form of $\abstr{x}$ is the minimal abstract element that stands for the same concrete set as~$\abstr{x}$.%
\footnote{In the terminology of some authors, these can be referred to as the \emph{reduced forms} or \emph{closed forms}, and the $\alpha_F \circ \gamma_F$ operation is a \emph{reduction} or \emph{closure}. For instance, in the octagon abstract domain, the closure $\alpha_F \circ \gamma_F$ is implemented by a variant of Floyd-Warshall shortest path~\cite{mine:hosc06,Mine_AST_WCRE01}.}
For instance, $s_1 \leq 1 \land s_2 \leq 1 \land s_1 + s_2 \leq 2$ is in normal form, while $s_1 \leq 1 \land s_2 \leq 1 \land s_1 + s_2 \leq 3$ is not.

\subsection{Optimal Abstract Transformers for Program Semantics}
\label{part:template_linear_constraints_transformers}
We shall consider the input-output relationships of programs with rational or real variables. We first narrow the problem to programs without loops and consider programs built from linear arithmetic assignments, linear tests, and sequential composition. 
Noting $a, b, \dots$ the values of program variables $\texttt{a}, \texttt{b}\dots$ at the beginning of execution and $a', b', \dots$ the output values, the semantics of a program $P$ is defined as a formula $\sem{P}$ such that $(a, b, \dots, a', b', \dots) \models P$ if and only if the memory state $(a', b', \dots)$ can be reached at the end of an execution starting in memory state $(a, b, \dots)$:
\begin{description}
\item[Arithmetic]
  $\sem{a:=L(a, b, \dots)+K}_F \definedAs a'=L(a, b, \dots)+K \land b'=b \land c'=c \land \dots$ where $K$ is a real (in practice, rational) constant and $L$ is a linear form, and $b,c,d\dots$ are all the variables except~$a$;

\item[Tests]
  $\sem{\texttt{if~} c \texttt{~then~} p_1 \texttt{~else~} p_2} \definedAs
   (c \land \sem{p_1}_F) \lor (\neg c \land \sem{p_2}_F)$;

\item[Non deterministic choice]
  $\sem{a:=\texttt{random}} \definedAs b'=b \land c'=c \land \dots$, for all variables except~$a$;

\item[Failure]
  $\sem{\texttt{fail}} \definedAs \textsf{false}$;

\item[Skip]
  $\sem{\texttt{skip}} \definedAs a'=a \land b'=b \land c'=c \land \dots$

\item[Sequence]
$\sem{P_1; P_2}_F \definedAs \exists a'', b'', \dots ~ f_1 \land f_2$ where
$f_1$ is $\sem{P_1}_F$ where $a'$ has been replaced by $a''$, $b'$ by $b''$ etc., 
$f_2$ is $\sem{P_2}_F$ where $a$ has been replaced by $a''$, $b$ by $b''$ etc.
\end{description} 

In addition to  linear inequalities and conjunctions, such formulas contain disjunctions (due to tests and multiple branches) and existential quantifiers (due to sequential composition).

Note that so far, we have represented the concrete denotational semantics \emph{exactly}. This representation of the transition relation using existentially quantified formulas is evidently as expressive as a representation by a disjunction of convex polyhedra (the latter can be obtained from the former by quantifier elimination and conversion to disjunctive normal form), but is more compact in general.

Consider now a domain definition formula $F \definedAs L_1(s_1, s_2, \dots) \leq p_1 \land \dots \land L_n(s_1, s_2, \dots) \leq p_n$ on the program inputs, with parameters $\vec{p}$ and free variables $\vec{s}$, and another $F' \definedAs L'_1(s'_1, s'_2, \dots) \leq p'_1 \land \dots \land L'_n(s'_1, s'_2, \dots) \leq p'_{n'}$ on the program outputs, with parameters $\vec{p'}$ and free variables $\vec{s'}$. Sound forward program analysis consists in deriving a \emph{safe post-condition} from a precondition: starting from any state verifying the precondition, one should end up in the post-condition. Using our notations, the \emph{soundness condition} is written 
\begin{equation}\label{eqn:soundness}
\forall \vec{s},\vec{s'}~ F \land \sem{P} \implies F'
\end{equation}
The free variables of this relation are $\vec{p}$ and $\vec{p'}$: the formula links the value of the parameters of the input constraints to admissible values of the parameters for the output constraints.
Note that this soundness condition can be written as a universally quantified formula, with no quantifier alternation. Alternatively, it can be written as a conjunction of correctness conditions for each output constraint parameter:
\begin{equation}\label{eqn:loopless_correctness}
C'_i \definedAs \forall \vec{s},\vec{s'}~ F \land \sem{P} \implies L'_i(\vec{s'}) \leq p'_i.
\end{equation}

Let us take a simple example: if $P$ is the program instruction $z:=x+y$, $F \definedAs x \leq p_1 \land y \leq p_2$, $F' \definedAs z \leq p'_1$, then $\sem{P} \definedAs z' = x+y$, and the soundness condition is
$\forall x, y, z'~ (x \leq p_1 \land y \leq p_2 \land z'=x+y \implies z' \leq p'_1)$. Remark that this soundness condition is equivalent to a formula without quantifiers $p'_1 \geq p_1 + p_2$, which may be obtained through quantifier elimination. Remark also that while any value for $p'_1$ fulfilling this condition is \emph{sound} (for instance, $p'_1=1000$ for $p_1=p_2=1$), only one value is \emph{optimal} ($p'_1=2$ for $p_1=p_2=1$). An optimal value for the output parameter $p'_i$ is defined by $O'_i \definedAs C'_i \land \forall q'_i~ (C'_i[q'_i / p'_i] \implies p'_i \leq q'_i)$. Again, quantifier elimination can be applied; on our simple example, it yields $p'_1 = p_1 + p_2$.

If there are $n$ input constraint parameters $p_1, \dots, p_n$, then the optimal value for each output constraint parameter $p'_i$ is defined by a formula $O'_i$ with $n+1$ free variables  $p_1, \dots, p_n, p'_i$:
\begin{equation}\label{eqn:loopless_optimality}
O'_i \definedAs C'_i \land \forall p''_i (C'_i[p''_i/p'_i] \Rightarrow p'_i \leq p'_i)
\end{equation}

\begin{lemma}\label{lem:loopless_correctness}
The formula $O'_i$ defined at Eq.~\ref{eqn:loopless_optimality}, using the correctness subformula $C'_i$ from Eq.~\ref{eqn:loopless_correctness}, defines $p'_i$ as the least possible value for the parameter of the constraint $L_i$ after executing the transition $\sem{p}$ from a state verifying constraints~$F$.
\end{lemma}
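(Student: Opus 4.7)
The plan is to unpack the two formulas into semantic statements and then observe that $O'_i$ picks out exactly the minimum of the set of sound bounds. Concretely, I would introduce the auxiliary set
\[
R(\vec{p}) \;=\; \bigl\{\, L'_i(\vec{s'}) \;\bigm|\; \exists \vec{s},\ \vec{s}\in\gamma_F(\vec{p}) \,\land\, (\vec{s},\vec{s'})\models \sem{P}\, \bigr\},
\]
the set of values of $L'_i$ actually reachable at the output. Unfolding the definition in~\eqref{eqn:loopless_correctness}, the universal closure means that $C'_i(\vec{p}, p'_i)$ holds iff every element of $R(\vec{p})$ is $\leq p'_i$, i.e.\ iff $p'_i$ is an upper bound of $R(\vec{p})$. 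This first step is purely a rewriting of a universally quantified implication into an upper-bound condition and needs no machinery beyond the definition of $\sem{P}$.

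Second, I would read $O'_i$ as the conjunction ``$p'_i$ is a sound upper bound'' (the left conjunct $C'_i$) together with ``no strictly smaller sound upper bound exists'' (the right conjunct). Here it is essential to note a typographic slip in~\eqref{eqn:loopless_optimality}: as printed, the body reads $p'_i \leq p'_i$, which is trivially true and would collapse $O'_i$ to just $C'_i$. The intended formula is $C'_i[p''_i/p'_i] \Rightarrow p'_i \leq p''_i$, and only with this reading does the second conjunct express minimality among sound bounds. Under this correction, $O'_i(\vec{p}, p'_i)$ holds iff $p'_i$ is the minimum of the upward-closed set $\{q \mid C'_i(\vec{p},q)\}$. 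Combined with the first step, that minimum is exactly $\sup R(\vec{p})$ whenever it is finite and attained, which is what the lemma calls ``the least possible value for the parameter of the constraint $L_i$''.

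Third, I would briefly dispose of the degenerate cases to make precise what ``defines $p'_i$'' means. If $\sup R(\vec{p}) = +\infty$, then no rational $p'_i$ satisfies $C'_i$ and so $O'_i$ has no solution, consistent with the absence of a sound finite bound; symmetrically, if $F$ is unsatisfiable at $\vec{p}$, then $C'_i$ is vacuously true for every $p'_i$ and $O'_i$ again has no solution, matching the convention that no bound is meaningful on an unreachable program point. In all remaining cases the standard fact that the image of a polyhedron under a linear form attains its supremum (in $\bbQ$) gives uniqueness of the solution to $O'_i$. The principal obstacle is not really mathematical---once the formulas are unfolded, everything reduces to the order-theoretic observation that the minimum of an upward-closed subset of $\bbQ$, when it exists, equals its infimum---but notational: spotting the typo in $O'_i$ and being explicit about the side-conditions under which the ``least value'' actually exists.
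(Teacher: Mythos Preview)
Your proposal is correct and follows essentially the same approach as the paper's own proof, which consists of just two sentences: ``$O'_i$ explicitly defines the least possible value of $C'_i$. $C'_i$ explicitly defines all the acceptable values for parameter $p'_i$ in the postcondition constraint.'' Your version is considerably more careful---you unpack $C'_i$ explicitly as an upper-bound condition on the reachable set, correctly flag the typo $p'_i \leq p'_i$ (which should read $p'_i \leq p''_i$), and treat the degenerate cases the paper leaves implicit---but the underlying decomposition into ``$C'_i$ = soundness'' and ``second conjunct = minimality'' is identical.
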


\begin{proof}
$O'_i$ explicitly defines the least possible value of $C'_i$. $C'_i$ explicitly defines all the acceptable values for parameter~$p'_i$ in the postcondition constraint.
\end{proof}

This formula defines a \emph{partial function} from $\bbQ^n$ to $\bbQ$, in the mathematical sense: for each choice of $p_1, \dots, p_n$, there exist at most a single $p'_i$. The values of $p_1, \dots, p_n$ for which there exists a corresponding $p'_i$ make up the \emph{domain of validity} of the abstract transfer function. Indeed, this function is in general not defined everywhere; consider for instance the program:
\lstset{language=Cext}
\begin{lstlisting}
if (x >= 10) { y = nondeterministic_choice_in_all_reals; }
else { y = 0; }
\end{lstlisting}
If $F = x \leq p_1$ and $F' = y \leq p'_1$, then $O'_1 \equiv p_1 < 10 \land p'_1=0$, and the function is defined only for $p_1 < 10$, with constant value~$0$.

At this point, we have a characterisation of the optimal abstract transformer corresponding to a program fragment $P$ and the input and output domain definition formulas as $n$ formulas (where $n$ is the number of output parameters) $O'_i$ each defining a function (in the mathematical sense) mapping the input parameters $\vec{p}$ to the output parameter~$p'_i$.

Another example: the absolute value function $y:=|x|$, again with the interval abstract domain. The semantics of the operation is $(x \geq 0 \land y=x) \lor (x < 0 \land y=-x)$; the precondition is $x \in [x_{\min}, x_{\max}]$ and the post-condition is $y \in [y_{\min}, y_{\max}]$.
Acceptable values for $(y_{\min},y_{\max})$ are characterised by
formula
\begin{equation}
C' \definedAs
\forall x~x_{\min} \leq x \leq x_{\max} \implies y_{\min} \leq |x| \leq y_{\max}
\end{equation}
The optimal value for $y_{\max}$ is defined by $O' \definedAs C' \land \forall y'_{\max}~ C'[y'_{\max}/y_{\max}] \allowbreak\implies\allowbreak y_{\max} \leq y'_{\max}$. Quantifier elimination over this last formula gives as characterisation for the least, optimal, value for $y_{\max}$:
\begin{equation}\label{eqn:abs_dnf}
(x_{\min} + x_{\max} \geq 0 \land y_{\max}=x_{\max}) \lor
(x_{\min} + x_{\max} < 0 \land y_{\max} = -x_{\min}).
\end{equation}

We shall see in the next sub-section that such a formula can be automatically compiled into code (Listing~\ref{ex:optimalabs}).
\lstset{language=Cext}
\begin{lstlisting}[float,caption={Optimal transformer for $y_{\max}$, corresponding to program $y = |x|$ with $x_{\min} \leq x \leq x_{\max}$},label=ex:optimalabs]
if (xmin + xmax >= 0) {
  ymax = xmax;
} else {
  ymax = -xmin;
}
\end{lstlisting}

\subsection{Generation of the Implementation of the Abstract Domain}
\label{part:template_linear_constraints_gen}
Consider formula~\ref{eqn:abs_dnf}, defining an abstract transfer function.
On this disjunctive normal form we see that the function we have defined is \emph{piecewise linear}: several regions of the range of the input parameters are distinguished (here, $x_{\min} + x_{\max} < 0$ and $x_{\min} + x_{\max} \geq 0$), and on each of these regions, the output parameter $y_{\max}$ is a linear function of the input parameters. Given a disjunct (such as $y_{\max} = -x_{\min} \land x_{\min} + x_{\max} < 0$), the domain of validity of the disjunct can be obtained by existential quantifier elimination over the result variable (here $\exists y_{\max}~(y_{\max} = -x_{\min} \land x_{\min} + x_{\max} < 0)$). The union of the domains of validity of the disjuncts is the domain of validity of the full formula. The domains of validity of distinct disjuncts can overlap, but in this case, since $O'_i$ defines a partial function in the mathematical sense, that is, a relation $R(x,y)$ such that for any $x$ there is at most one $y$ such that $R(x,y)$, the functions defined by such disjuncts coincide on their overlapping domains of validity.

This suggests a first algorithm for conversion to an executable form:
\begin{enumerate}
\item Put $O'_i$ into quantifier-free, disjunctive normal form $C_1 \lor \dots \lor C_n$.
\item For each disjunct $C_j$, obtain the validity domain $V_j$ as a conjunction of linear inequalities; this corresponds to a projection of the polyhedron defined by $C_j$ onto variables $p_1,\dots,p_n$, parallel to $p'_i$. Then, solve $C_j$ for $p'_i$ (obtain $p'_i$ as a linear function $v_i$ of the $p_1, \dots, p_n$).
\item Output the result as a cascade of if-then-else and assignments.
\end{enumerate}

Consider now the example at the end of \S\ref{part:template_linear_constraints_transformers} and especially Formula~\ref{eqn:abs_dnf}, defining $y_{\max}$ as a function of $x_{\min}$ and $x_{\max}$:
$(x_{\min} + x_{\max} \geq 0 \land y_{\max}=x_{\max}) \lor
(x_{\min} + x_{\max} < 0 \land y_{\max} = -x_{\min})$.
Let us first take the first disjunct $C_1 \definedAs x_{\min} + x_{\max} \geq 0 \land y_{\max}=x_{\max}$. Its validity domain is $\exists y_{\max}~C_1$, that is, $x_{\min} + x_{\max} \geq 0$. Furthermore, on this validity domain, we can solve for $y_{\max}$ as a function of $x_{\min}$ and $x_{\max}$, and we obtain $y_{\max} = x_{\max}$. We therefore can print out the following test:
\lstset{language=Cext}
\begin{lstlisting}
if (xmin + xmax >= 0) {
  ymax = xmax;
}
\end{lstlisting}

Now take the second disjunct $C_2 \definedAs (x_{\min} + x_{\max} < 0 \land y_{\max} = -x_{\min})$. It can similarly be turned into another test, which we put in the ``else'' branch of the preceding one. We thus obtain:
\lstset{language=Cext}
\begin{lstlisting}
if (xmin + xmax >= 0) {
  ymax = xmax;
} else
if (xmin + xmax < 0) {
  ymax = -xmax;
}
\end{lstlisting}

Observe that in the above program, the second test is redundant. If we had been more clever, we would have realized that in the ``else'' branch of the first test, $x_{\min}+x_{\max} < 0$ always holds, and thus it is useless to test this condition. Furthermore, in an if-then-else cascade obtained with the above method, the same condition may have to be tested several times. We shall now propose an algorithm that constructs an if-then-else \emph{tree} such that no useless tests are generated.


\begin{algorithm}
\caption{$\textsc{ToITEtree}(F,p',T)$: turn a formula defining $p'$ as a function of the other free variables of $F$ into a tree of if-then-else constructs, assuming that $T$ holds.}
\label{alg:ToIfThenElseTree}

\begin{algorithmic}
\STATE $D (= C_1 \lor \dots \lor C_n) \gets \textsc{QElimDNFModulo}(\{\},F,T)$
\FORALL{$C_i \in D$}
  \STATE $P_i \gets \textsc{QElimDNFModulo}(p', F, T)$
\ENDFOR
\STATE $P \gets \textsc{Predicates}(P_1, \dots, P_n)$
\IF{$P = \emptyset$}
  \ENSURE $\exists p'~F$ is always true
  \STATE $O \gets \textsc{Solve}(D, p')$
\ELSE
  \STATE $K \gets \textsc{Choose}(P)$
  \STATE $O \gets \textsf{IfThenElse}(K, \textsc{ToITEtree}(F,p',T \land K),\allowbreak \textsc{ToITEtree}(F,p',T \land \neg K))$
\ENDIF
\end{algorithmic}
\end{algorithm}

The idea of the algorithm is as follows:
\begin{itemize}
\item Each path in the if-then-else tree corresponds to a conjunction $C$ of conditions (if one goes through the ``if'' branch of \texttt{if (a)} and the ``else'' branch of \texttt{if (b)}, then the path corresponds to $a \land \neg b$).
\item The formula $O'_i$ is simplified relatively to~$C$, a process that prunes out conditions that are always or never satisfied when $C$~holds.
\item If the path is deep enough, then the simplified formula becomes a conjunction. This conjunction, as a relation, is a partial function from the $p_1,\dots,p_n$ to the $p'_i$ we wish to compute. Again, we solve this conjunction to obtain $p'_i$ as an explicit function of the $p_1,\dots,p_n$. For instance, in the above example, we obtain $y_{\max}$ as a function of $x_{\min}$ and~$x_{\max}$.
\end{itemize}

We say that two formulas $A$ and $B$ are equivalent, denoted by $A \equiv B$, if they have the same models, and we say that they are equivalent modulo a third formula $T$, denoted by $A \equiv_T B$, if $A \land T \equiv B \land T$. Intuitively, this means that we do not care what happens where $F$ is false, thus the terminology ``don't care set'' sometimes found for~$\neg F$.

$\textsc{QElimDNFModulo}\allowbreak(\vec{v},F,T)$ is a function that, given a possibly empty vector of variables $\vec{v}$, a formula $F$ and a formula $T$, outputs a quantifier-free formula $F'$ in disjunctive normal form such that $F' \equiv_T \exists \vec{v}~F$ and no useless predicates are used. In \citep{Monniaux_LPAR08}, we have presented such a function as a variant of quantifier elimination.

We need some more auxiliary functions. $\textsc{Predicates}(F)$ returns the set of atomic predicates of~$F$. $\textsc{Solve}(C, p')$ solves a conjunction of inequalities $C$ for variable $p'$. If $C$ does not contain redundant inequalities, then it is sufficient to look for inequalities of the form $p' \geq L$ or $p' \leq L$ and output $p' = L$.%
\footnote{In practice, any library for convex polyhedra can provide a basis of the equalities implied by a polyhedron given by constraints, in other words a system of linear inequalities defining the affine span of the polyhedron. It is therefore sufficient to take that basis and keep any equality where $p'$~appears.} Finally $\textsc{Choose}(P)$ chooses any predicate in the set of predicates $P$; one good heuristic seems to be to choose the most frequent atomic predicate where $p'_i$ does not occur.

Let us take, as a simple example, Formula~\ref{eqn:abs_dnf}. We wish to obtain $y_{\max}$ as a function of $x_{\min}$ and $x_{\max}$, so in the algorithm \textsc{ToITEtree} we set $p' \definedAs y_{\max}$. $C_1$ is the first disjunct $x_{\min} + x_{\max} \geq 0 \land y_{\max}=x_{\max}$, $C_2$ is the second disjunct $x_{\min} + x_{\max} < 0 \land y_{\max} = -x_{\min}$. We project $C_1$ and $C_2$ parallel to $y_{\max}$, obtaining respectively $P_1 = (x_{\min}+x_{\max} \geq 0)$ and $P_2 = (x_{\min}+x_{\max} < 0)$. We choose $K$ to be the predicate $x_{\min}+x_{\max} \geq 0$ (in this case, the choice does not matter, since $P_1$ and $P_2$ are the negation of each other).
\begin{itemize}
\item The first recursive call to \textsc{ToITEtree} is made with $T \definedAs (x_{\min}+x_{\max} \geq 0)$. Obviously, $F \land T \equiv (y_{\max}=x_{\max}) \land T$ and thus $(\exists y_{\max} F) \land T \equiv T$.

$\textsc{QElimDNFModulo}(y_{\max}, F, T)$ will then simply output the formula ``true''. It then suffices to solve for $y_{\max}$ in $y_{\max}=x_{\max}$. This yields the formula for computing the correct value of $y_{\max}$ in the cases where $x_{\min}+x_{\max} \geq 0$.

\item The second recursive call is made with $T \definedAs (x_{\min}+x_{\max} < 0 $. The result is $y_{\max}=-x_{\min}$, the formula for computing the correct value of $y_{\max}$ in the cases where $x_{\min}+x_{\max} < 0$.
\end{itemize}
These two results are then reassembled into a single if-then-else statement, yielding the program at the end of \S\ref{part:template_linear_constraints_transformers}.

The algorithm terminates because paths of depth $d$ in the tree of recursive calls correspond to truth assignments to $d$ atomic predicates among those found in the domains of validity of the elements of the disjunctive normal form of $F$. Since there is only a finite number of such predicates, $d$ cannot exceed that number. A single predicate cannot be assigned truth values twice along the same path because the simplification process in $\textsc{QElimDNFModulo}$ erases this predicate from the formula.

This algorithm seems somewhat unnecessarily complex. It is possible that techniques based upon AIGs, performing simplification with respect to ``don't care sets'' \citep{DBLP:conf/tacas/SchollDPK09}, could also be used.

\subsection{Least Inductive Invariants}
\label{part:template_linear_constraints_invariants}

We have so far considered programs without loops. The analysis of program loops, as well as proofs of correctness of programs with loops using Floyd-Hoare semantics, is based upon the notion of \emph{inductive invariants}. A set of states $I$ is deemed to be an inductive invariant for a loop if it contains the initial state and it is stable by the loop iteration --- in other words, if it is impossible to move from a state inside $I$ to a state outside $I$ by one iteration of the loop. The intersection of all inductive invariants is also an inductive invariant --- in fact, it is the least inductive invariant.

Any property true over the least inductive invariant of a loop is true throughout the execution of that loop; for this reason, some authors call such properties \emph{invariants} (without the ``inductive'' qualifier), while some other authors call invariants what we call inductive invariants.

It would be interesting to be able to compute the least invariant (inductive or noninductive) within our chosen abstract domain; in other words, compute the least element in our abstract domain that contains the least inductive invariant of the loop or program. Unfortunately, this is in general impossible; indeed, doing so would entail solving the halting problem. Just take any program $P$, create a fresh variable, and consider the program that initialises $x$ to $0$, runs $P$, and then sets $x$ to $1$. Clearly, the least invariant interval for this program and variable $x$ is $[0,0]$ if $P$ does not terminate, and $[0,1]$ if it does terminate.

We thus settle for a simpler problem: find the \emph{least inductive invariant within our abstract domain}, that is, the least element in our abstract domain that is an inductive invariant. Note that even on very simple examples, this can be vastly different from computing the least invariant within the abstract domain. Take for instance the simple program of Fig.~\ref{fig:unstable}, which has a couple of real variables $(x,y)$ and rotates them by $45^\circ$ at each iteration. The $(x,y)$ couple always stays within the square $[-1,1]\times[-1,1]$, so this square is an invariant within the interval domain. Yet this square is not an inductive invariant, for it is not stable if rotated by $45^\circ$; in fact, the only inductive invariant within the interval domain is $(-\infty,+\infty)\times(-\infty,+\infty)$, which is rather uninteresting! Note that on the same figure, the octagon abstract domain would succeed (and produce a regular octagon centered on $(0,0)$).

\begin{figure}
\begin{center}
\includegraphics[width=0.5\columnwidth]{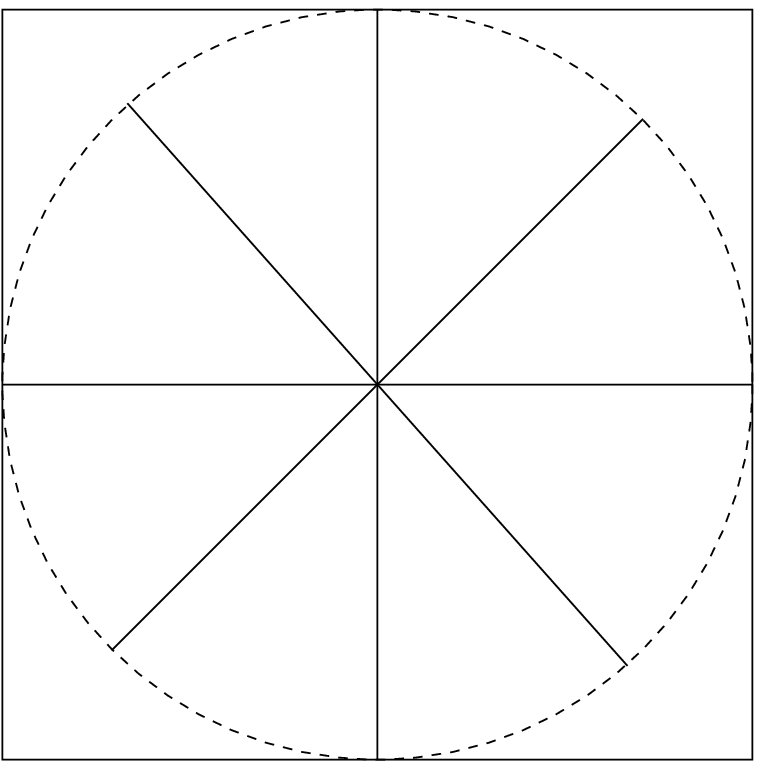}
\end{center}
\caption{The least fixed point representable in the domain ($\lfp(\alpha \circ f \circ \gamma)$) is not necessarily the least approximation of the least fixed point  ($\alpha(\lfp f)$) inside the abstract domain. For instance, if we take a program initialised by $x \in [-1,1]$ and $y=0$, and at each iteration, we rotate the point by $45^\circ$, the least invariant is an 8-point star, and its best approximation inside the abstract domain of intervals is the square $[-1,1]^2$. However, this square is not an inductive invariant: no rectangle (product of intervals) is stable under the iterations, thus there is no abstract inductive invariant within the interval domain.
Using the domain of convex polyhedra, one would obtain a regular octagon.}
\label{fig:unstable}
\end{figure}

\subsubsection{Stability Inequalities}
Consider a program fragment: \lstinline@while (c) { p; }@.
We have domain definition formulas $F \definedAs L_1(s_1, \dots, s_m) \leq p_1 \land \dots \land L_n(s_1, \dots, s_m) \leq p_n$ for the precondition of the program fragment , and $F' \definedAs L'_1(s_1, \dots, s_m) \leq p'_1 \land \dots \land L'_n(s_1, \dots, s_m) \leq p'_{n'}$ for the invariant.

Define $G = \sem{c} \land \sem{p}$. $G$ is a formula whose free variables are $s_1, \dots, s_m, s'_1, \dots, s'_m$ such that  $(s_1, \dots, s_m, s'_1, \dots, s'_m) \models G$ if and only if the state $(s'_1, \dots, s'_m)$ can be reached from the state $(s_1, \dots, s_m)$ in exactly one iteration of the loop. A set $W \subseteq \bbQ^m$ is said to be an \emph{inductive invariant} for the head of the loop if $\forall \vec{s} \in W, \forall \vec{s'}~(\vec{s}, \vec{s'}) \models G \implies \vec{s'} \in W$. We seek inductive invariants of the shape defined by $F'$, thus solutions for $\vec{p'}$ of the \emph{stability condition}:
\begin{equation}\label{eqn:stability}
\forall \vec{s}, \vec{s'}~ F' \land G \implies F'[\vec{s'}/\vec{s}].
\end{equation}

Not only do we want an inductive invariant, but we also want the initial states of the program to be included in it. The condition then becomes
\begin{equation}\label{eqn:inductive_invariant}
H \definedAs (\forall \vec{s}, F \implies F') \land
(\forall \vec{s}, \vec{s'}~ F' \land G \implies F'[\vec{s'}/\vec{s}])
\end{equation}
This is an invariant condition for the head $A$ of a loop written as:
\begin{lstlisting}
loop:
  /* A */
  if (! c) goto end;
  /* B */
  loop body
  goto loop;
end:
\end{lstlisting}

Alternatively, one can consider an invariant condition for location~$B$. The condition then becomes
\begin{equation}\label{eqn:inductive_invariant_alt}
H_{\textrm{alt}} \definedAs \forall \vec{s}~
  \sem{c} \land
  (F \lor (\exists \vec{s''}~  \sem{p}[\vec{s''}/\vec{s},\vec{s}/\vec{s'}] \land F'[\vec{s''}/\vec{s}])) \implies F'
\end{equation}
This alternate condition is very similar to the previous one (a universally quantified formula with no alternation, since the $\exists$ is negated). For the sake of simplicity, we shall only discuss the treatment of formula~$H$; formula $H_{\textrm{alt}}$ can be treated in the same way.

This formula links the values of the input constraint parameters $p_1, \dots, p_n$ to acceptable values of the invariant constraint parameters $p'_1, \dots, p'_{n'}$.
In the same way that our soundness or correctness condition on abstract transformers allowed any sound post-condition, whether optimal or not, this formula allows any inductive invariant of the required shape as long as it contains the precondition, not just the least one.

The intersection of sets defined by $\vec{p'}_1$ and $\vec{p'}_2$ is defined by $\min(\vec{p'}_1, \vec{p'}_2)$. More generally, the intersection of a family of sets, unbounded yet closed under intersection, defined by $\vec{p'} \in Z$ is defined by $\min \{ p' \mid p' \in Z\}$.
We take for $Z$ the set of acceptable parameters $\vec{p'}$ such that $\vec{p'}$ defines an inductive invariant and $\forall \vec{s}, F \implies F'$; that is, we consider only inductive invariants that contain the set $I = \{ \vec{s} \mid F \}$ of precondition states.

We deduce that $p'_i$ is uniquely defined by:
$p'_i = \min(\exists p'_1, \allowbreak\dots,\allowbreak p'_{i-1},\allowbreak p'_{i+1},\allowbreak \dots,\allowbreak p'_{n'}~H)$
which can be rewritten as
\begin{equation}\label{eqn:optimality}
O'_i \definedAs (\exists p'_1, \dots, p'_{i-1}, p'_{i+1}, \dots, p'_{n'}~H) \land
(\forall \vec{q'}~H[\vec{q'}/\vec{p'}] \implies p'_i \leq q'_i)
\end{equation}
The free variables of this formula are $p_1, \dots, p_n, p'_i$. This formula defines a function (in the mathematical sense) defining $p'_i$ from $p_1, \dots, p_n$. As before, this function can be compiled to an executable version using cascades or trees of tests.

\begin{lemma}\label{lem:loop_correctness}
The formula $O'_i$ defined at Eq.~\ref{eqn:optimality} defines the least value of $p'_i$ for an inductive invariant of the shape defined by~$F$ for the transition relation defined by~$G$.
\end{lemma}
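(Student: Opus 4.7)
The plan is to show two things: the set $Z(\vec{p}) \definedAs \{\vec{p'} \in (\bbR \cup \{+\infty\})^{n'} \mid H \text{ holds with this } \vec{p'}\}$ of acceptable parameter vectors has a coordinate-wise least element, and the formula $O'_i$ exactly picks out the $i$-th coordinate of that element.

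First I would unpack $H$ from Eq.~\ref{eqn:inductive_invariant}: $H(\vec{p'})$ asserts that $\gamma_{F'}(\vec{p'})$ is an inductive invariant for the transition relation $G$ containing the precondition set $\gamma_F(\vec{p})$. Then I would establish the key closure property: $Z(\vec{p})$ is closed under coordinate-wise infima. Since $F'$ is a conjunction $\bigwedge_j L'_j(\vec{s}) \leq p'_j$, each atom is a half-space whose parameter appears only as an upper bound, so a direct computation yields $\gamma_{F'}(\min(\vec{p'}_1,\vec{p'}_2)) = \gamma_{F'}(\vec{p'}_1) \cap \gamma_{F'}(\vec{p'}_2)$, and the identity extends to arbitrary families taking coordinate-wise infima. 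Combining this with the standard facts that arbitrary intersections of inductive invariants for $G$ are again inductive invariants and that intersections of supersets of $\gamma_F(\vec{p})$ remain supersets, the componentwise infimum $\vec{m}$ defined by $m_i \definedAs \inf\{(\vec{p'})_i \mid \vec{p'} \in Z(\vec{p})\}$ itself satisfies $H$, so $\vec{m} \in Z(\vec{p})$ and is the coordinate-wise least element.

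Now I would parse $O'_i$ from Eq.~\ref{eqn:optimality}. Its first conjunct $\exists p'_1,\dots,p'_{i-1},p'_{i+1},\dots,p'_{n'}\, H$ says $p'_i$ is the $i$-th coordinate of some $\vec{p'} \in Z(\vec{p})$, forcing $p'_i \geq m_i$ by definition of the infimum. Its second conjunct $\forall \vec{q'}\,(H[\vec{q'}/\vec{p'}] \implies p'_i \leq q'_i)$ forces $p'_i \leq q'_i$ for every $\vec{q'} \in Z(\vec{p})$; instantiating at $\vec{q'} = \vec{m}$ gives $p'_i \leq m_i$. The two inequalities together yield $p'_i = m_i$, exactly the least value of the $i$-th parameter among inductive invariants of the prescribed shape containing the precondition.

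The main obstacle is making the closure under arbitrary infima fully rigorous: one must check that template constraint domains really are closed under infinite intersections in the lattice $\bbR \cup \{+\infty\}$, that the infimum of an unbounded family is attained there, and that $Z(\vec{p})$ is nonempty so that $m_i$ is well-defined and the first conjunct of $O'_i$ is satisfiable — the latter is immediate, since the trivial vector $(+\infty,\dots,+\infty)$ always lies in $Z(\vec{p})$. Once the lattice-theoretic picture is in place, the remaining work is routine bookkeeping verifying that the first-order semantics of $O'_i$ tracks this characterization step by step.
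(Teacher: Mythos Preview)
Your argument is correct and follows the same line as the paper's (much terser) proof: $H$ cuts out the set $Y$ of admissible parameter vectors, closure of template constraints under coordinate-wise infima---which the paper states in the prose just before Eq.~\ref{eqn:optimality}---ensures $\inf Y \in Y$, and the two conjuncts of $O'_i$ pin down its $i$-th coordinate exactly as you check. One small caveat: your appeal to $(+\infty,\dots,+\infty)$ to guarantee $Z(\vec p)\neq\emptyset$ steps outside the base framework, which works over finite reals only (infinities are introduced separately in \S\ref{part:infinities}); in the base setting $Z(\vec p)$ may genuinely be empty, in which case $O'_i$ is simply unsatisfiable, consistently with the paper's earlier discussion of domains of validity for the partial functions these formulas define.
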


\begin{proof}
Similarly as for lemma~\ref{lem:loopless_correctness},
the formula $H$ defined at Eq.~\ref{eqn:inductive_invariant} defines a set $Y$ of admissible $p'_1, \dots, p'_{n'}$ such that $F \definedAs L_1(s_1, \dots, s_m) \leq p'_1 \land \dots \land L_n(s_1, \dots, s_m) \leq p'_{n'}$ is an inductive invariant for the loop.
Formula $O'_i$ defines $p'_i$ to be $\inf \{ p''_i \mid (p''_1, \dots, p''_n) \in Y \}$. in other words, $(p'_1, \dots, p'_{n'}) = \inf Y$.
\end{proof}
\bigskip

Thus the overall operation of the analysis method:

We start from quantified formulas $O'_i$ defining the least inductive invariant of the loop \emph{in the abstract domain} (Lemma~\ref{lem:loop_correctness}). We eliminate quantifiers from these formulas; since this does not change their models, the resulting formulas without quantifiers also define the least inductive invariant in the abstract domain.
\begin{itemize}
\item Either the problem had no precondition parameters $p_1, \dots, p_n$, and
  thus each formula $O'_i$ has only one variable $p'_i$. It consists of linear
  (in)equalities, and has a single model for $p'_i$,
  which is straightforward to extract.
  Collect these values, one obtains the values defining the least 
  invariant $(p'_1,\dots,p'_{n'})$ in the abstract domain.
\item Either the problem has precondition parameters and one employs one of the
  methods in \S\ref{part:template_linear_constraints_gen} to obtain equivalent
  executable code.
\end{itemize}

The overall correctness of the method is quite tautological. We start from a non-executable specification of the least inductive invariant in the abstract domain in the form of quantified formulas. We eliminate the quantifiers from these formulas, then process them into equivalent executable code. At all steps, we have preserved logical equivalence with the original definition. In short, we have synthetized the implementation of the best transformer from its specification.

\subsubsection{Simple Loop Example}
\label{part:loop_example}
To show how the method operates in practice, let us consider first a very simple example (\lstinline|nondet()| is a nondeterministic choice):
\begin{lstlisting}
int i=0;
while (i <= n) {
  if (nondet()) {
    i=i+1;
    if (i == n) {
      i=0;
    }
  }
}
\end{lstlisting}

Let us abstract \texttt{i} at the head of the loop using an interval $[i_{\min},i_{\max}]$. For simplicity, we consider the case where the loop is at least entered once, and thus $i=0$ belongs to the invariant.
For better precision, we model each comparison $x \neq y$ over the integers as
$x >= y+1 \lor x <= y-1$; similar transformations apply for other operators.
The formula expressing that such an interval is an inductive invariant is:%
\begin{multline}
i_{\min}\leq 0\land 0\leq
   i_{\max}\land \forall i \forall i'~
   ((i_{\min}\leq i\land i\leq i_{\max}\land\\
   (((i+1\leq n-1\lor i+1\geq
   n+1)\land i'=i+1)\lor\\
   (i+1=n+1\land i'=0)\lor
   i'=i))\implies
   (i_{\min}\leq i'\land
   i'\leq i_{\max}))
\end{multline}

Quantifier elimination produces:
\begin{multline}
(i_{\min}\leq 0\land
   i_{\max}\geq 0\land
   i_{\max}<n\land
   -i_{\min}+n-2<0)\lor\\
   (i_{\min}\leq 0\land
   i_{\max}\geq 0\land
   i_{\max}-n+1\geq 0\land
   i_{\max}<n)
\end{multline}

The formulas defining optimal $i_{\min}$ and $i_{\max}$ are:
\begin{eqnarray}
i_{\min}\geq 0\land
   i_{\min}\leq 0\land n>0\\
(i_{\max}= 0\land n>0\land n<2) \lor
(i_{\max} = n-1 \land i_{\max}\geq 1)
\end{eqnarray}

We note that this invariant is only valid for $n > 0$, which is unsurprising given that we specifically looked for invariants containing the precondition $i = 0$. The output abstract transfer function is therefore:
\begin{lstlisting}
if (n <= 0) {
  fail();
} else {
  iMin = 0;
  if (n < 2) {
    iMax = 0;
  } else /* n >= 2 */
    iMax = n-1;
  }
}
\end{lstlisting}
The case disjunction \texttt{n < 2} looks unnecessary, but is a side effect of the use of rational numbers to model a problem over the integers. The resulting abstract transfer function is optimal, but on such a simple case, one could have obtained the same using polyhedra \cite{CousotHalbwachs78} or octagons \cite{Mine_AST_WCRE01}.

We have already noted (\S\ref{part:absint}) that even with we replace \lstinline|n| by the constant~10, the classical widening/narrowing approach will fail to identify the least invariant of this loop, and that extra techniques such have widening with thresholds have to be used.

\subsubsection{Synchronous Data Flow Example: Rate Limiter}
\label{part:rate_limiter}
To go back to the original problem of floating-point data in data-flow languages, let us consider the following library block: a \emph{rate limiter}. As seen in Listing~\ref{ex:rate_limiter}, such a block in inserted in a reactive loop, as shown below, where \lstinline@assume(c)@ stands for \lstinline@if (c) {} else {fail();}@ and \lstinline|fail()| aborts execution.
\begin{lstlisting}[float,caption={Rate limiter. },label=ex:rate_limiter]
while (true) {
  ...
  e1 = random(); assume(e1 >= e1min && e1 <= e1max);
  e2 = random(); assume(e2 >= e2min && e2 <= e2max);
  e3 = random(); assume(e3 >= e3min && e3 <= e3max);
  olds1 = s1;
  if (nondet()) {
    s1 = e3;
  } else {
    if (e1 - olds1 < -e2) {
      s1 = olds1 - e2;
    }
    if (e1 - olds1 > e2) {
      s1 = olds1 + e2;
    }
  }
  ...
}
\end{lstlisting}

This block has three input streams \lstinline|e1|, \lstinline|e2|, and \lstinline|e3|, and one output stream \lstinline|s1|. In intuitive terms, \lstinline|s1| is the same as \lstinline|e1| but where the maximal slope of the signal between two successive clock ticks is bounded by \lstinline|e2|, thus the name \emph{rate limiter}. At some points in time (modelled by nondeterministic choice), the value of the signal is reset to that of the third input~\lstinline|e3|.

We are interested in the input-output behaviour of that block: obtain bounds on the output \lstinline|s1| of the system as functions of bounds on the inputs (\lstinline|e1|, \lstinline|e2|, \lstinline|e3|).
One difficulty is that the \texttt{s1} output is memorised, so as to be used as an input to the next computation step. The semantics of such a block is therefore expressed as a fixed point.

We wish to know the least inductive invariant of the form
${s_1}_{\textrm{min}} \leq s_1 \leq {s_1}_{\textrm{max}}$ under the assumption that ${e_1}_{\textrm{min}} \leq {e_1}_{\textrm{max}} \land
{e_2}_{\textrm{min}} \leq {e_2}_{\textrm{max}} \land
{e_3}_{\textrm{min}} \leq {e_3}_{\textrm{max}}$.
The stability condition yields, after quantifier elimination and projection on ${s_1}_{\textrm{max}}$ of the condition ${s_1}_{\textrm{max}} \geq {e_1}_{\textrm{max}} \land {s_1}_{\textrm{max}} \geq {e_3}_{\textrm{max}}$. Minimisation then yields an expression that can be compiled to an if-then-else tree (Listing~\ref{ex:itetree}).

\begin{lstlisting}[float,caption={If then else tree},label=ex:itetree]
if (e1max > e3max) {
  s1max = e1max;
} else {
  s1max = e3max;
} 
\end{lstlisting}

This result, automatically obtained, coincides with the intuition that a rate limiter (at least, one implemented with exact arithmetic) should not change the range of the signal that it processes.

This program fragment has a rather more complex behaviour if all variables and operations are IEEE-754 floating-point, since rounding errors introduce slight differences of regimes between ranges of inputs. Rounding errors in the program to be analysed introduce difficulties for analyses using widenings, since invariant candidates are likely to be ``almost stable'', but not truly stable, because of these errors. Again, there exist workarounds so that widening-based approaches can still operate \cite[Sec.~7.1.4]{ASTREE_PLDI03}. We shall see in \S\ref{part:float} how to correctly abstract floating-point behaviour within our framework; unfortunately, the formulas produced tend to be very large due to many case disjunctions. The implementation of the abstract transformer produced for the above rate limiter in floating-point does not fit within one page of article, this is why we omitted it.

\section{Extensions of the framework using real linear arithmetic}
\label{part:extensions}
We shall describe here a few extensions to the class of programs and domains that we can handle, all of which are still based on quantifier elimination over real linear arithmetic. (In \S\ref{part:other_numerical_domains}, we shall investigate extensions using other arithmetic theories.)

\subsection{Emptiness}
We have so far supposed that the statement where the inductive invariant is computed is reachable, and thus that there exists some nonempty set of initial states that constrain the inductive invariant from below. More generally, and especially for the constructs described in \S\ref{part:partitioning} and \S\ref{part:noop_nests}, it may be necessary to encode the bottom element $\bot$ of the abstract domain, which represents the empty set of states.
This can be done using one Boolean variable per element that might be empty: instead of template parameters $(p_1,\dots,p_n)$, we will have $(b,p_1,\dots,p_n)$, with the semantics that $\gamma(\false,p_1,\dots,p_n) = \emptyset$ and $\gamma(\true,p_1,\dots,p_n) = \{ \vec{s} \mid \forall i~ L_i(\vec{s}) \leq p_i \}$.

\citet{Sankaranarayana+others/05/Scalable} use $p_i = -\infty$ for this, but in our framework, infinities themselves have to be encoded using Booleans, as we'll see in the next subsection. Furthermore, if we have an abstract element in normal form with a constraint $L_i(v_1, \dots) \leq -\infty$, it means that all $p_j$ are $-\infty$ and thus it is sufficient to have a single Boolean variable for all of them.

\subsection{Infinities}
\label{part:infinities}
The techniques explained in Sec.~\ref{part:template_linear_constraints} allow only finite bounds. Consider for instance the following program:
\begin{lstlisting}
x = 0;
while (true) {
  x = x+1;
}
\end{lstlisting}
We would like to obtain as a result of the analysis that $x$ lies in $[0,\infty)$. Yet, what will happen is that the formula describing the couples $(x_{\min},x_{\max})$ defining inductive invariants $x_{\min} \leq x \leq x_{\max}$ will be simplified to $\false$.

More annoyingly, with the following program:
\begin{lstlisting}
x = y;
while (true) {
  x = x+1;
}
\end{lstlisting}
we would at least hope to infer that $x_{\min} = y_{\min}$. Yet, if we look for an invariant of the form $x_{\min} \leq x \leq x_{\max}$, there is no solution for any value of $(y_{\min},y_{\max})$! In \S\ref{part:presburger_abstraction} we shall see an example where it is actually interesting to infer the domain of values of the precondition where the least inductive invariant interval is finite, and that this domain can simply be obtained by existential quantification on the parameters of the inductive invariant followed by elimination. But in general, this is not what we want; instead, we would prefer to allow infinite values in the $p$ and~$p'$.

This can be easily achieved by a minor alteration to our definitions. Each parameter $p_i$ is replaced by two parameters $p^b_i$ and $p^\infty_i$. $p^\infty_i$ is constrained to be in $\{0,1\}$ (if the quantifier elimination procedure in use allows Boolean variables, then $p^\infty_i$ can be taken as a Boolean variable); $p^\infty_i=0$ means that $p_i$ is finite and equal to $p_i^b$, $p^\infty_i=1$ means $p_i = +\infty$. $L_i \leq p_i$ becomes $(p^\infty_i > 0) \lor (L_i \leq p^b_i)$, $L_i < p_i$ becomes $(p^\infty_i > 0) \lor (L_i < p^b_i)$. After this rewriting, all formulas are formulas of the theory of linear inequalities without infinities and are amenable to the appropriate algorithms.

Unfortunately, the added combinatorial complexity induced by these Boolean variables tends to lead to intolerable computation times in the quantifier elimination procedures. Further work is needed, probably in the direction of better quantifier elimination procedures for combinations of Boolean and real quantified variables. Alternatively, one can envision directly including reasoning about infinities inside these procedures, though this is of course a delicate matter because of the possibility of generation of indeterminate forms~$\infty-\infty$ if formulas are handled without special care.

\subsection{Non-Convex Domains}
\label{part:non-convex}
Section~\ref{part:template_linear_constraints} constrains formulas to be conjunctions of inequalities of the form $L_i \leq p_i$. What happens if we consider formulas that may contain disjunctions?

The template linear constraint domains of section~\ref{part:template_linear_constraints} have a very important property: they are closed under (infinite) intersection; that is, if we have a family $\vec{p} \in W$, then there exists $p_0$ such that $\bigcap_{\vec{p} \in W} \gamma_F(\vec{p}) = \gamma_F(\vec{p}_0)$ (besides, $p_0 = \inf \{ \vec{p} \mid \vec{p} \in W \}$). This is what enables us to request the \emph{least} element that contains the exact post-condition, or the least inductive invariant in the domain: we take the intersection of all acceptable elements.

Yet, if we allow non-convex domains, there does not necessarily exist a least element $\gamma_F(\vec{p})$ such that $S \subseteq \gamma_F(\vec{p})$. Consider for instance $S = \{0, 1, 2\}$ and $F$ representing unions of two intervals $((-x \leq p_1 \land x \leq p_2) \lor (-x \leq p_3 \land x \leq p_4)) \land p_2 \leq -p_3$. There are two, incomparable, minimal elements of the form $\gamma_F(\vec{p})$: $p_1=p_2=0 \land p_3=-1 \land p_4=2$ and $p_1=0 \land p_2=1 \land p_3=-2 \land p_4=2$.

We consider formulas $F$ built out of linear inequalities $L_i(s_1, \allowbreak\dots,\allowbreak s_n) \leq p_i$ as atoms, conjunctions, and disjunctions. 
By induction on the structure of $F$, we can show that $\gamma_F: (\bbR \cup \{ -\infty \})^n \rightarrow \parts{\bbR^n}$ is inf-continuous; that is, for any descending chain $(\vec{p}_i)_{i \in I}$ such that $\lim_i \vec{p}_i = \vec{p}_\infty$, then $\gamma_F(\vec{p}_i)$ is decreasing and $\bigcap_{i \in I} \gamma_F(\vec{p}_i) = \gamma_F(\vec{p}_\infty)$. The property is trivial for atomic formulas, and is conserved by greatest lower bounds ($\land$) as well as binary least upper bounds ($\lor$).

Let us consider a set $S \subseteq \parts{\bbR^n}$, stable under arbitrary intersection (or at least, greatest lower bounds of descending chains). $S$ can be for instance the set of invariants of a relation, or the set of over-approximations of a set~$W$. $\gamma_F^{-1}(S)$ is the set of suitable domain parameters; for instance, it is the set of parameters representing inductive invariants of the shape specified by $F$, or the set of representable over-approximations of~$W$. $\gamma_F^{-1}(S)$ is stable under greatest lower bounds of descending chains: take a descending chain $(\vec{p}_i)_{i \in I}$, then $\gamma_F(\lim_i \vec{p}_i) = \cap_i \gamma_F(\vec{p}_i) \in S$ by inf-continuity and stability of $S$. By Zorn's lemma, $\gamma_F^{-1}(S)$ has at least one minimal element.

Let $P[\vec{p}]$ be a formula representing~$\gamma_F^{-1}(S)$ (Sec.~\ref{part:template_linear_constraints} proposes formulas defining safe post-conditions and inductive invariants). The formula $G[\vec{p}] \definedAs P[\vec{p}] \land \forall \vec{p'}~ P[\vec{p'}] \land \vec{p'} \leq \vec{p} \implies \vec{p} \leq \vec{p'}$ defines the minimal elements of $\gamma^{-1}(S)$.

For instance, consider $\vec{p}=(a,b,c,d)$, $F \definedAs (-x \leq a \land x \leq b) \lor (-x \leq c \land x \leq d)$, representing unions of two intervals $[-a, b] \cup [-c,d]$. We want upper-approximations of the set $\{0, 1, 3\}$; that is $P[\vec{p}] \definedAs \forall x~ (x=0 \lor x=1 \lor x=3 \implies F[\vec{p},x])$. We add the constraint that $-a \leq b \land b \leq -c \land -c \leq d$, so as not to obtain the same solutions twice (by exchange of $(a,b)$ and $(c,d)$) or solutions with empty intervals. By quantifier elimination over $G$, we obtain
$(a = 0 \land b = 1 \land c = -3 \land d=3) \lor (a = 0 \land b = 0 \land c = -1 \land d=3)$, that is, either $[0,0] \cup [1, 3]$ or $[0,1] \cup [3, 3]$.

\subsection{Domain Partitioning}
\label{part:partitioning}
\begin{figure}
\begin{center}
\includegraphics{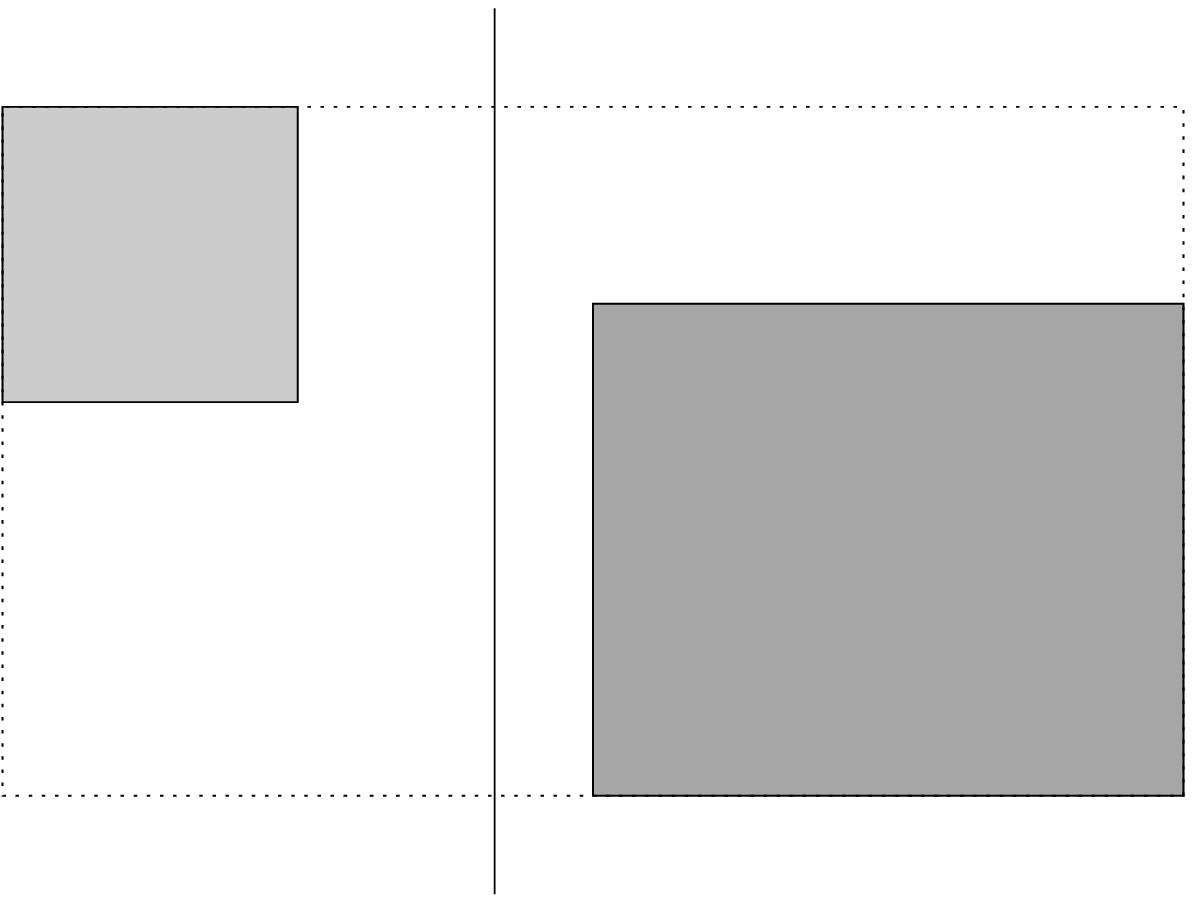}
\end{center}

\caption{The state space is partitioned into $x < 0$ and $x \geq 0$, and on each element of the partition we have a product of intervals, respectively $[-5,-2] \times [4, 7]$ and $[1,7] \times [0,5]$. Without the disjunction, we would have had to consider the much larger set $[-5,7] \times [0,7]$.}
\label{fig:partition01}
\end{figure}

Non-convex domains, in general, are not stable under intersections and thus ``best abstraction'' problems admit multiple solutions as minimal elements of the set of correct abstractions. As explained in the preceding subsection, is not very satisfactory nor efficient for analysis. There are, however, non-convex abstract domains that are stable under intersection and thus admit least elements as well as the template linear constraint domains of Sec.~\ref{part:template_linear_constraints}: those defined by partitioning of the state space.

If, for instance, we know that a program or system behaves differently according to the sign of $x$, then we can decide \emph{in advance} to partition the state space into $x \geq 0$ and $x < 0$. On each element of the partition, we can have a separate template (example Fig.~\ref{fig:partition01}). We can accommodate this within our framework.

More formally, consider pairwise disjoint subsets $(C_i)_{i \in I}$ of the state space $\bbQ^m$, and abstract domains stable under intersection $(S_i)_{i \in I}$, $S_i \subseteq \parts{C_i}$. Elements of the partitioned abstract domain are unions $\bigcup_{i \in I} s_i$ where $s_i \in S_i$. If $\left(\bigcup_i s_{i,j}]\right)_{j \in J}$ is a family of elements of the domain, then $\bigcap_{j \in J}\left(\bigcup_{i \in I} s_{i,j}\right) = \bigcup_{i \in I} \bigcap_{j \in J} s_{i,j}$; that is, intersections are taken separately in each~$C_i$. in other words, the parameters of the templates on each element of the partition can be dealt with independently of each other.

Note the difference with the general disjunctive domains of \S\ref{part:non-convex}: in the general disjunctive domains, there is no partition fixed \emph{a priori}, this is why we may have several incomparable minimal elements $[0,0] \cup [1,2]$ and $[0,1] \cup [2,2]$ in the domain of disjunctions of two intervals representing the same set $\{0,1,2\}$. For a fixed partition $C_i$ and corresponding domains $S_i$, for any set $X$, for every $i$, there is a least element representing $X \cap C_i$ in domain~$S_i$. This motivates the following construct:

Take a family $(F_i[\vec{p}])_{i \in I}$ of formulas defining template linear constraint domains (conjunctions of linear inequalities $L_i(s_1,\allowbreak \dots,\allowbreak s_n) \leq p_i$) and a family $(C_i)_{i \in I}$ of formulas such that for all $i$ and $i'$, $C_i \land C_{i'}$ is equivalent to \textsf{false} and $C_1 \lor \dots \lor C_l$ is equivalent to \textsf{true}. $F = (C_1 \land F_1) \lor \dots \lor (C_l \land F_l)$ then defines an abstract domain such that $\gamma_F$ is a inf-morphism. All the techniques of \S\ref{part:template_linear_constraints} then apply.

For instance, by choosing $C_1 \definedAs x \geq 0$, $C_2 \definedAs x < 0$, $F_1 \definedAs x_{\min}^1 \leq x \leq x_{\max}^1 \land y_{\min}^1 \leq x \leq y_{\max}^1$, and $F_2 \definedAs x_{\min}^2 \leq x \leq x_{\max}^2 \land y_{\min}^2 \leq x \leq y_{\max}^2$, we can obtain Figure~\ref{fig:partition01}.

The above constructions are equivalent to assigning a separate control point to each element in the partition, with guards leading to these points according to the $C_i$, and then performing as described in~\S\ref{part:noop_nests}.

\subsection{Floating-Point Computations}
\label{part:float}
Real-life programs do not operate on real numbers; they operate on fixed-point or floating-point numbers. Floating point operations have few of the good algebraic properties of real operations; yet, they constitute approximations of these real operations, and the \emph{rounding error} introduced can be bounded.

In IEEE floating-point \cite{IEEE-754}, each atomic operation (noting $\oplus$, $\ominus$, $\otimes$, $\oslash$, $\sqrt{}_f$ for operations so as to distinguish them from the operations $+$, $-$, $\times$, $/$, $\sqrt{}$ over the reals) is mathematically defined as the image of the exact operation over the reals by a rounding function.%
\footnote{We leave aside the peculiarities of some implementations, such as those of most C compilers over the 32-bit Intel platform where there are ``extended precisions'' types used for some temporary variables and expressions can undergo double rounding~\cite{Monniaux_TOPLAS08}.}
This rounding function, depending on user choice, maps each real $x$ to the nearest floating-point value $r_n(x)$ (\emph{round to nearest mode}, with some resolution mechanism for non representable values exactly in the middle of two floating-point values), $r_{-\infty}(x)$ the greatest floating-point value less or equal to $x$ (\emph{round toward $-\infty$}), $r_{+\infty}(x)$ the least floating-point value greater or equal to $x$ (\emph{round toward $+\infty$}), $r_0(x)$ the floating-point value of the same sign as $x$ but whose magnitude is the greatest floating-point value less or equal to $|x|$ (\emph{round toward $0$}). If $x$ is too large to be representable, $r(x)=\pm\infty$ depending on the size of $x$

The semantics of the rounding operation cannot be exactly represented inside the theory of linear inequalities.%
\footnote{To be pedantic, since IEEE floating-point formats are of a finite size, the rounding operation could be exactly represented by enumeration of all possible cases; this would anyway be impossible in practice due to the enormous size of such an enumeration.}
As a consequence, we are forced to use an axiomatic over-approximation of that semantics: a formula linking a real number $x$ to its rounded version~$r(x)$.

\citet{Mine_ESOP04} uses an inequality $|r(x)-x| \leq \erel\cdot |x| + \eabs$, where $\erel$ is a \emph{relative error} and $\eabs$ is an \emph{absolute error}, leaving aside the problem of overflows. The relative error is due to rounding at the last binary digit of the significand, while the \emph{absolute error} is due to the fact that the range of exponents is finite and thus that there exists a least positive floating-point number and some nonzero values get rounded to zero instead of incurring a relative error (or get rounded to a denormal, see below).

Because our language for axioms is richer than the interval linear forms used by Min\'e, we can express more precise properties of floating-point rounding. We recall briefly the characteristics of IEEE-754 floating-point numbers. 
Nonzero floating point numbers are represented as follows:
$x = \pm 2^e m$ where $1 \leq m < 2$ is the \emph{mantissa} or
\emph{significand}, which
has a fixed number $p$ of bits, and $e$ ($E_{\min} \leq e \leq E_{\max}$
is the \emph{exponent}).
The difference introduced by changing the last binary digit of the
mantissa is $\pm s.\elast$ where $\elast = 2^{-(p-1)}$:
the \emph{unit in the last place} or \emph{ulp}.
Such a decomposition is unique for a given number
if we impose that the leftmost digit of the
mantissa is $1$ --- this is called a \emph{normalised representation}.
Except in the case of numbers of very small magnitude, IEEE-754 always
works with normalised representations. There exists a least positive normalised number $m_{\textrm{normal}}$ and a least positive denormalized number $m_{\textrm{denormal}}$, and the denormals are the multiples of $m_{\textrm{denormal}}$ less than $m_{\textrm{normal}}$. All representable numbers are multiples of~$m_{\textrm{denormal}}$.
\medskip

We shall now attempt to define an imprecise axiomatisation of the relationship between the result of a floating-point operation and the result of the corresponding ideal operation over real numbers. For this, we shall distinguish operations plus and minus on the one hand, multiplication and division on the other hand, since the former have properties of which we can take advantage.

Let us consider addition or subtraction $x=\pm a \pm b$. Suppose that $0 \leq x \leq m_{\textrm{normal}}$. $a$ and $b$ are multiples of $m_{\textrm{denormal}}$ and thus $a-b$ is exactly represented as a denormalized number; therefore $r(x)=x$. If $x > m_{\textrm{normal}}$, then $|r(x)-x| \leq \erel.x$. In other words, if the result of an addition or subtraction is denormal, then it is exact.%
\footnote{William Kahan has written extensively about the advantages of the existence and proper handling of denormals, otherwise known as \emph{gradual underflow}, as opposed to flushing to zero all numbers too small to be approximated by normal numbers, a process known as \emph{flush to zero}. For instance, with gradual underflow, $a \ominus b = 0$ is equivalent to $a = b$, quite a desirable property. See e.g.~\citep[p.~6]{Kahan_ARITH_17U}. Unfortunately, certain architectures do not implement gradual underflow, for the sake of efficient; then one has to use the $\textit{Round}$ and not the $\textit{Round}^\pm$ predicate.}
The cases for $x \leq 0$ are symmetrical.

We therefore obtain the following axiomatisation of the rounding of a positive real number $x$, result of a floating-point addition or subtraction, to a floating-point value $r$, using round-to-nearest:
\begin{equation}
\textit{Round}^\pm_+(r,x) \definedAs (x \leq m_{\textrm{normal}} \land r = x)
  \lor (x > m_{\textrm{normal}} \land -\erel.x \leq r - x \leq \erel.x)
\end{equation}
We then use this predicate to construct an axiomatisation for rounding of numbers of any sign:
\begin{multline}
\textit{Round}^\pm(r,x) \definedAs (x =0 \land r =0) \lor
  (x > 0 \land \textit{Round}^\pm_+(r,x)) \lor\\
  (x < 0 \land \textit{Round}^\pm_+(-r,-x))
\end{multline}
Equivalently, this last formula can be simplified into the equivalent:
\begin{multline}
\textit{Round}^\pm(r,x) \definedAs
  (-m_{\textrm{normal}} \leq x \leq m_{\textrm{normal}} \land x = r) \\
  \lor
\left(x > m_{\textrm{normal}} \land x (1-\erel) \leq r \leq x (1+\erel)\right)\\
  \lor
\left(x <-m_{\textrm{normal}} \land x (1+\erel) \leq r \leq x (1-\erel)\right)
\end{multline}

Consider now multiplication or division $x = a \otimes b$ or $x = a \oslash b$. Here, we cannot assume that if the result is denormal, then it is exact. A correct axiomatization of rounding for positive numbers is:
\begin{multline}
\textit{Round}_+(r,x) \definedAs
  (x \leq m_{\textrm{normal}} \land r \geq 0 \land x-\eabs \leq r \leq x+\eabs)\\
  \lor (x > m_{\textrm{normal}} \land -\erel.x \leq r - x \leq \erel.x)
\end{multline}
where $\eabs = m_{\textrm{denormal}}/2$. Now for rounding for any sign:
\begin{multline}
\textit{Round}(r,x) \definedAs (x =0 \land r =0) \lor
  (x > 0 \land \textit{Round}_+(r,x)) \lor\\
  (x < 0 \land \textit{Round}_+(-r,-x))
\end{multline}
\medskip

To each floating-point expression $e$, we associate a ``rounded-off'' variable $r_e$, the value of which we constrain using $\textit{Round}^\pm(r_e,e)$ or $\textit{Round}(r_e,e)$. For instance, a expression $e = a \oplus b$ is replaced by a variable $r_e$, and the constraint $\textit{Round}^\pm(r_e, a+b)$ is added to the semantics. In the case of a compound expression $e = ab+c$, we introduce $e_1 = ab$, and we obtain $\textit{Round}^\pm(r_e, r_{e_1}+c) \land \textit{Round}(r_{e_1}, ab)$. If we know that the compiler uses a fused multiply-add operator, we can use $\textit{Round}(r_e,ab+c)$ instead.

The drawbacks of such axiomatization of floating-point operations are that they introduce case disjunctions into formulas, leading to extra work for quantifier elimination procedures, and also that they make very unnatural coefficients appear --- that is, rational numbers with large numerator and denominator, such as $1+\erel$ or $1-\erel$, or very large integers such as $1/m_{\textrm{normal}}$. To go back to our very simple rate limiter running example (\S\ref{part:rate_limiter}), analysis times are multiplied by 12 if one stops assuming that floating-point variables behave like reals, and instead uses some axiomatization~\citep[p.~9]{Monniaux_LPAR08}. Furthermore, the formula produced is very large and hardly readable, doing many case disjunctions. Perhaps it is possible to simplify such large formulas by allowing some limited overapproximation --- for instance, we can replace the function mapping $p$ to $p'$ as defined by $(0 \leq p \leq 1 \leq p' = p) \lor (p > 1 \leq p' = (1+\epsilon)p)$ by the function defined by $0 \leq p \land p' = (1+\epsilon)p$, since the latter formula always gives a solution for $p'$ greater or equal to that of the former. Even better, such simplifications could be performed during the elimination procedure. Further investigations are needed in that respect.

\subsection{Integers}
\label{part:integers}
We have mentioned in \S\ref{part:qf_lia} that Presburger arithmetic admits quantifier elimination. We therefore could apply quantifier elimination in Presburger arithmetic, similarly as we do with respect to real linear arithmetic. Unfortunately, as we shall see in \S\ref{part:presburger_abstraction}, such an approach suffers from explosion in the size of the formulas and the cost of the algorithms.

Instead, we used a \emph{relaxation} approach: all integers are treated as reals; strict inequalities $a < b$ where both sides are integers are recoded $a \leq b-1$. For instance, if the program contains a if-then-else over $x \leq y$, then $x \leq y$ is a precondition for the ``then'' branch, and $x \geq y+1$ is a precondition of the ``else'' branch. Note that this means that traces of execution such that $y < x < y+1$ are considered to fail.

In some cases, such as the McCarthy 91 function example from \S\ref{part:McCarthy91}, it is necessary to constraint the reasoning procedures so that they consider that the negation of $a \leq b$ is $a \geq b+1$. We hope that improvements of quantifier elimination algorithms will be able to allow a more elegant approach.

Another issue is that in many programming languages, integers are bounded and that arithmetic operations are actually performed modulo $2^n$ (with $n$ typically 8, 16, 32 or 64). The problem then lies within an enormous, but finite, state space. Clever techniques for reasoning about \emph{bit-vector arithmetic} are being investigated by the SMT-solving community. Again, we hope that future work will provide good quantifier elimination techniques for this arithmetic, or combinations thereof with the linear theory of reals.

\subsection{Nonlinear constructs}
In \S\ref{sec:real_polynomial_abstraction} we explain how to fully deal with nonlinear program constructs and templates, at the expense of very high computational complexity.

A more practical approach is to linearise the program expressions~\citep{mine:vmcai06}\cite[ch.~6]{Mine_PhD}. If we encounter an assignment $z := x*y$ and we know, perhaps through rough interval analysis, an interval $y \in [y_{\min},y_{\max}]$, we can use the following abstraction of the semantics of this assignment:
\begin{equation}\label{eqn:linearize}
(x \geq 0 \land x y_{\min} \leq z' \leq x y_{\max}) \lor
(x < 0 \land x y_{\max} \leq z' \leq x y_{\min})
\end{equation}

If we know intervals for both $x$ and $y$, then we can apply Eq.~\ref{eqn:linearize} to both $(x,[y_{\min},y_{\max}])$ and  $(y,[x_{\min},x_{\max}])$, and take the conjunction of the resulting formulas.

\section{Complex control flow}
\label{part:control-flow}
We have so far assumed no procedure call, and at most one single loop. We shall see here how to deal with arbitrary control flow graphs and call graph structures.


\subsection{Arbitrary control graph and loop nests}
\label{part:noop_nests}
In Sec.~\ref{part:template_linear_constraints_invariants}, we have explained how to abstract a single fixed point. The method can be applied to multiple nested fixed points by replacing the inner fixed point by its abstraction. For instance, assume the rate limiter of Sec.~\ref{part:rate_limiter} is placed inside a larger loop. One may replace it by its abstraction:

\begin{lstlisting}
if (e1max > e3max) {
  s1max = e1max;
} else {
  s1max = e3max;
} 
assume(s1 <= s1max);
/* and similar for s1min */
\end{lstlisting}

Alternatively, we can extend our framework to an arbitrary control flow graph with nested loops, the semantics of which is expressed as a single fixed point.
We may use the same method as proposed by \citet[\S2]{Gulwani_PLDI08} and other authors. First, a \emph{cut set} of program locations is identified; any cycle in the control flow graph must go through at least one program point in the cut set. In widening-based fixed point approximations, one classically applies widening at each point in the cut set. A simple method for choosing a cut set is to include all targets of back edges in a depth-first traversal of the control-flow graph, starting from the start node; in the case of structured program, this amounts to choosing the head node of each loop. This is not necessarily the best choice with respect to precision, though~\cite[\S2.3]{Gulwani_PLDI08}; \citet[Sec.~3.6]{Bourdoncle_PhD} discusses methods for choosing such as cut-set.

To each point in the cut set we associate an element in the abstract domain, parameterised by a number of variables. The values of these variables for all points in the cut-set define an invariant candidate. Since paths between elements of the cut sets cannot contain a cycle, their denotational semantics can be expressed simply by an existentially quantified formula. Possible paths between each source and destination elements in the cut-set defined a stability condition (Formula~\ref{eqn:stability}). The conjunction of all these stability conditions defines acceptable inductive invariants. As above, the least inductive invariant is obtained by writing a minimisation formula (Sec.~\ref{part:template_linear_constraints_invariants}).

Let us consider the loop nest in Listing~\ref{ex:loop_nest}.
\lstset{language=Cext}
\lstinputlisting[float,caption={Loop nest},label=ex:loop_nest]{loop_nest.c}
We choose program points $A$ and $B$ as cut-set. At program point A, we look for an invariant of the form $I_A(i,j) \definedAs i_{\min,A} \leq i \leq i_{\max,A}$, and at program point B, for an invariant of the form $I_B(i,j) \definedAs i_{\min,B} \leq i \leq i_{\max,B} \land j_{\min} \leq j \leq j_{\max} \land \delta_{\min} \leq i-j \leq \delta_{\max}$ (a \emph{difference-bound} invariant). The (somewhat edited for brevity) stability formula is written:
\begin{multline}
\forall j~ I_A(0,j)\land \forall i \forall j~
((I_B(i,j)\land j\geq i\land (i+1\leq 19\lor\\
i+1=20\lor i+1\geq 21))\Rightarrow
\text{If}[i+1=20,I_A(0,j),I_A(i+1,j)])\land\\
\forall i \forall j~ (I_A(i,j)\Rightarrow
I_B(i,0))\land \forall i \forall j~((I_B(i,j)\land
j<i)\\
\Rightarrow I_B(i,j+1))
\end{multline}
where $\text{If}[b,e_1,e_2] = (b \land e_1) \lor (\neg b \land e_2)$.

Replacing $I_A$ and $I_B$ into this formula, then applying quantifier elimination, we obtain a formula defining all acceptable tuples $(i_{\min,A}, i_{\max,A}, i_{\min,B}, i_{\max,B}, j_{\min}, j_{\max}, \delta_{\min}, \delta_{\max})$. Optimal values are then obtained by further quantifier elimination: $i_{\min,A}=i_{\min,B}=j_{\min}=0$, $i_{\max,A}=i_{\max,B}=19$, $j_{\max}=20$, $\delta_{\min}=1$, $\delta_{\max}=19$.

The same example can be solved by replacing $20$ by another variable \texttt{n} as in Sec.~\ref{part:loop_example}.

\subsection{Procedures and Recursive Procedures}
\label{part:McCarthy91}
We have so far considered abstractions of program blocks with respect to sets of program states. A program block is considered as a transformer from a state of input program states to the corresponding set of output program states. The analysis outputs a sound and optimal (in a certain way) abstract transformer, mapping an abstract set of input states to an abstract set of output states.

Assuming there are no recursive procedures, procedure calls can be easily dealt with. We can simply inline the procedure at the point of call, as done in e.g. \textsc{Astr\'ee} \cite{BlanchetCousotEtAl02-NJ,ASTREE_PLDI03,ASTREE_ESOP05}. Because inlining the concrete procedure may lead to code blowup, we may also inline its abstraction, considered as a nondeterministic program. Consider a complex procedure \texttt{P} with input variable \texttt{x} and output variable \texttt{x}. We abstract the procedure automatically with respect to the interval domain for the postcondition ($z_{\min} \leq z \leq z_{\max}$); suppose we obtain $z_{\max}:=1000; z_{\min}:=x$ then we can replace the function call by \lstinline@z <= 1000 && z >= x@. This is a form of \emph{modular interprocedural analysis}: considering the call graph, we can abstract the leaf procedures, then those calling the leaf procedures and so on. We can also do likewise for nested loops: abstract the innermost loop, then the next innermost one, etc.
This method is however insufficient for dealing with recursive procedures.

In order to analyse recursive procedures, we need to abstract not sets of states, but sets of pairs of states, expressing the input-output relationships of procedures. In the case of recursive procedures, these relationships are the least solution of a system of equations.

To take a concrete example, let us consider McCarthy's famous ``91 function''~\cite{Manna_McCarthy_69,Manna_Pnueli_JACM70}, which, non-obviously, returns 91 for all inputs less than 101:
\begin{lstlisting}
int M(int n) {
  if (n > 100) {
    return n-10;
  } else {
    return M(M(n+11));
  }
}
\end{lstlisting}

The concrete semantics of that function is a relationship $R$ between its input $n$ and its output $r$. It is the least solution of
\begin{multline}\label{eqn:McCarthy91_stability}
R \supseteq \{ (n, r) \in \bbZ^2 \mid (n > 100 \land r=n-10) \lor\\
   (n \leq 100 \land \exists n_2 \in \bbZ
   (n+11, n_2) \in R \land (n_2, r) \in R) \}
\end{multline}

We look for a inductive invariant of the form $I \definedAs ((n \geq A) \land (r-n \geq \delta) \land (r-n \leq \Delta)) \lor ((n \leq B) \land (r=C))$, a non-convex domain (Sec.~\ref{part:non-convex}). By replacing $R$ by $I$ into inclusion~\ref{eqn:McCarthy91_stability}, and by universal quantification over $n,r,n_2$, we obtain the set of admissible parameters for invariants of this shape. By quantifier elimination, we obtain $(C=91) \land (\delta=\Delta=-10) \land (A=101) \land (B=100)$ within a fraction of a second using \textsc{Mjollnir} (see Sec.~\ref{part:experiments}).

In this case, there is a single acceptable inductive invariant of the suggested shape. In general, there may be parameters to optimise, as explained in Sec.~\ref{part:template_linear_constraints_invariants}. The result of this analysis is therefore a map from parameters defining sets of states to parameters defining sets of pairs of states (the abstraction of a transition relation). This abstract transition relation (a subset of $X \times Y$ where $X$ and $Y$ are the input and output state sets) can be transformed into an abstract transformer in $X^\sharp \rightarrow Y^\sharp$ as explained in Sec.~\ref{part:template_linear_constraints_transformers}. Such an interprocedural analysis may also be used to enhance the analysis of loops~\cite{DBLP:conf/cc/MartinAWF98}.

\section{Implementations and Experiments}

\label{part:experiments}
We have implemented the techniques of Sec.~\ref{part:template_linear_constraints_abstraction} in quantifier elimination packages, including  \textsc{Mathematica}%
\footnote{\textsc{Mathematica} is a commercial computer algebra package available under an unfree license from \href{http://www.wolfram.com}{Wolfram Research} \citep{Mathematica_book}.}
and \textsc{Reduce}~3.8\footnote{\href{http://www.reduce-algebra.com/}{\textsc{Reduce}} is a computer algebra package from Anthony C. Hearn, now available under a modified BSD licence.}
+ \textsc{Redlog}\footnote{\href{http://www.redlog.eu}{\textsc{Redlog}} is an extension to \textsc{Reduce} for working over quantified formulas.}
in addition to our own package, \textsc{Mjollnir}~\cite{Monniaux_LPAR08}.%
\footnote{\textsc{Mjollnir} is available under a free license from the \href{http://www-verimag.imag.fr/~monniaux/mjollnir.html.en}{author's home page}. In addition to the author's own quantifier elimination techniques, it implements \citeauthor{FerranteRackoff75} and \citeauthor{LoosWeispfenning93}'s.}
We ignore which exact techniques are implemented within \textsc{Mathematica}.
\footnote{\citeauthor{LoosWeispfenning93}'s quantifier elimination procedure is used by \textsc{Mathematica} to perform simplifications over linear inequalities~\cite[\S A.9.5]{Mathematica_book}, but we are unsure whether this is the algorithm called by the \texttt{Reduce} function.}
\textsc{Redlog} appears to implement some virtual substitution method \citep{Redlog,Weispfenning88}.

As test cases, we took a library of operators for synchronous programming, having streams of floating-point values as input and outputs. These operators are written in a restricted subset of~C and take as much as 20 lines. A front-end based on \textsc{CIL}~\cite{Cil} converts them into formulas, then these formulas are processed and the corresponding abstract transfer functions are pretty-printed. Since for our application, it is important to bound numerical quantities, we chose the interval domain.

Among the extensions in \S\ref{part:extensions}, we implemented those relevant to floating-point (\S\ref{part:float}) and integers (\S\ref{part:integers}), and did more manual experiments with infinities (\S\ref{part:infinities}) and recursive functions (\S\ref{part:McCarthy91}).

For instance, the rate limiter presented in Sec.~\ref{part:rate_limiter} was extracted from that library. Since this operator includes a memory (a variable whose value is retained from a call to the operator to the next one), its data-flow semantics is expressed using a fixed-point. When considered with real variables, the resulting expanded formula was approximately 1000 characters long, and with floating point variables approximately 8000 characters long. Despite the length of these formulas, they can be processed by \textsc{Mjollnir} in a matter of seconds. The result can then be saved once and for all.

Analysers such as \textsc{Astr\'ee}~\cite{BlanchetCousotEtAl02-NJ,ASTREE_PLDI03,ASTREE_ESOP05} must have special knowledge about such operators, otherwise the analysis results are too coarse (for instance, the intervals do not get stabilized at all). The \textsc{Astr\'ee} development team therefore had to provide specialized, hand-written analyses for certain operators. In contrast, all linear floating-point operators in the library were analysed within a fraction of a second using the method in the present article, assuming that floating-point values in the source code were real numbers. If one considered instead the abstraction of floating-point computations using real numbers from Sec.~\ref{part:float}, computation times did not exceed 17~seconds per operator; the formulas produced are considerably more complex than in the real case. Note that this computation is done once and for all for each operator; a static analyser can therefore cache this information for further use and need not recompute abstractions for library functions or operators unless these functions are updated.

Our analyser front-end currently cannot deal with non-numerical operations and data structures (pointers, records, and arrays). It is therefore not yet capable of directly dealing with the real control programs that e.g. \textsc{Astr\'ee} accepts, which do not consist purely of numerical operators. We plan to integrate our analysis method into a more generic analyser. Alternatively, we plan to adapt a front-end for synchronous programming languages such as~\textsc{Simulink}, a tool widely used by control/command engineers.

The correctness of the methods described in this article does not rely on any particularity of the quantifier elimination procedure used, provided one also has symbolic computation procedures for e.g. putting formulas in disjunctive normal form and simplifying them. The difference between the various quantifier elimination and simplification procedures is efficiency; experiments showed that ours was vastly more efficient than the others tested for this kind of application. For instance, our implementation of our quantifier elimination algorithm~\cite{Monniaux_LPAR08} was able to complete the analysis of the rate limiter of Sec.~\ref{part:rate_limiter}, implemented over the reals, in 1.4~s, and in 17~s with the same example over floating-point numbers, while \textsc{Redlog} took 182~s for the former and could not finish the latter, and \textsc{Mathematica} could analyse neither (out-of-memory). On other examples, our quantifier elimination procedure is faster than the other ones, or can complete eliminations that the others cannot.

\section{Extensions to other numerical domains}
\label{part:other_numerical_domains}
We have so far concerned ourselves solely with real (and possibly Boolean) variables appearing in linear arithmetic formulas. In \S\ref{part:extensions}, we have seen how to reason over certain other data types (integers, floating-point values), but again modeling them as real numbers. Yet, in \S\ref{part:qe}, we pointed out that linear real arithmetic is not the only arithmetic theory with quantifier elimination algorithms; other well-known examples include Presburger arithmetic (linear integer arithmetic) and the theory of real closed fields (that is, polynomial real arithmetic).

In this section, we report on using quantifier elimination in both these theories for computing optimal transformers or fixed points. Unfortunately, experiments have shown that the high complexity of the quantifier elimination algorithms for these theories and the lack of simplifications for the formulas they produce preclude their use in practice. The results in this section are thus mainly of theoretical interest.

\subsection{Presburger arithmetic}
\label{part:presburger_abstraction}

The approach from \S\ref{part:integers} relaxes integers to reals. It therefore yields correct results, but might lead to overapproximations that could be avoided if integers were used instead. What if we used quantifier elimination on Presburger arithmetic instead?

Consider the following simple example:
\begin{lstlisting}
int a, m;
...
i = 0;
while (i < m) {
  i = i+a;
}
\end{lstlisting}

Let us abstract the loop variable \lstinline|a| using an interval $[l,h]$. These bounds must satisfy $I \definedAs l \leq 0 \leq h$ (initialisation of the loop) and
\begin{equation}
S \definedAs \forall i~ \left(l\leq i \leq u \Rightarrow (i > m \lor l \leq i+a \leq u)\right)
\end{equation}

The condition for the existence of a finite range of values for \lstinline|i| is therefore $\exists l \exists u~I \land S$.%
\footnote{This example is motivated by the fact that for $a \neq 0$, the loop terminates if and only if $[l,u]$ is finite. It is a simplified version of a loop termination problem from Paul Feautrier and Laure Gonnord.}
Intuitively, this condition is equivalent to $m < 0 \lor a \geq 0$. Yet, the formula produced by the quantifier elimination for Presburger arithmetic from \textsc{Redlog} yields a very large formula, more than one page long, which we have therefore omitted.%
\footnote{It could be that \textsc{Redlog} gives erroneous answers. At some point, we generated a formula $F$ with free variables $a,m$ such that \textsc{Redlog} produced \false when eliminating quantifiers from $\exists m \exists a~F$, and produced \true  when eliminating quantifiers from $\exists a \exists m~F$; at another point we made \textsc{Reduce}/\textsc{Redlog} crash with a segmentation fault.}
\textsc{Redlog} cannot even perform simplifications such as replacing $i=0 \lor i=1 \lor i=2$ by $0 \leq i \leq 2$.

In comparison, the real relaxation gives exact and fast results:
\begin{equation}
S_\bbR \definedAs \forall i~ \left(i < l \lor i > u \lor i \geq m+1 \lor l \leq i+a \leq u)\right)
\end{equation}
Eliminating quantifiers from $\exists l \exists u~I \land S_\bbR$ yields immediately the answer $a \geq 0 \lor m \leq -1$.

\subsection{Real polynomial constraint domains}
\label{sec:real_polynomial_abstraction}
We now consider the abstraction of program states (in $\bbR^V$) using domains defined by polynomial constraints, a natural extension of those seen previously (\S\ref{part:template_linear_constraints}); the orthogonal extensions from \S~\ref{part:extensions} also apply. Instead of quantifier elimination in linear real arithmetic, we shall use quantifier elimination in the theory of real closed fields. One difference, though, is that we will not be able to produce nice, closed form formulas, at least not in general.

\subsubsection{Method}
We generalise the constructs of \S\ref{part:template_linear_constraints_abstraction}, except those of \S~\ref{part:template_linear_constraints_gen}, to formulas over polynomial inequalities.
The same results hold:
\begin{itemize}
\item For any loop-free program code, and any template polynomial abstract domain with parameters $p_1,\dots,p_n$, there is a family of formulas $F_1,\dots,F_n$ that uniquely defines the optimal parameters $p'_1,\dots,p'_{n'}$ of the postcondition with respect those $p_1,\dots,p_n$ in the precondition (the free variables of $F_i$ are among $p_1,\dots,p_n,p'_i$).
\item For any loop, and any template polynomial abstract domain with parameters $p_1,\allowbreak\dots,\allowbreak p_n$, there is a family of formulas $F_1,\dots,F_n$ that uniquely defines the optimal parameters $p'_1,\dots,p'_{n'}$ of the least inductive invariant for that loop, with respect those $p_1,\dots,p_n$ in the precondition (the free variables of $F_i$ are among $p_1,\dots,p_n,p'_i$).
\end{itemize}

The main obstacle is the high cost of quantifier elimination in the theory of real closed fields.
The other crucial difference is that it is in general impossible to move from such a formula to a formula computing $p'_i$ from $p_1,\dots,p_n$, as we did in \S~\ref{part:template_linear_constraints_gen}. By performing the cylindrical algebraic decomposition with variables $p_1,\dots,p_n$ first, we could obtain the tree structure with case disjunctions, as the output of Algorithm~\ref{alg:ToIfThenElseTree}. But at the leaves, we would obtain formulas defining $p'_i$ as a specific root of a polynomial in the variable $p'_i$, with coefficients themselves polynomials in $p_1,\dots,p_n$.

In Sec.~\ref{part:template_linear_constraints_gen}, we explained how to turn a formula over linear arithmetic defining a partial function from $p_1,\dots,p_n$ to $p'_i$ --- that is, a relation between $(p_1,\dots,p_n,p'_i)$ such that for any choice of $p_1,\dots,p_n$ there is at most one suitable $p'_i$ --- into an algorithmic function, expressed using linear assignments and if-then-else over linear inequalities.
Can we do the same here? That is, can we turn a formula over nonlinear arithmetic defining a partial function from $p_1,\dots,p_n$ to $p'_i$ into a simple algorithm written using, say, if-then-else tests and normal arithmetic operators as well as the $n$-th root operations $\sqrt[n]{}$~? For instance, if we have a formula $p' \geq 0 \land {p'}^2 = 2p$, we would like to obtain $p' = \sqrt{2p}$.

Unfortunately, this is impossible in the general case.
The Abel-Ruffini theorem, from Galois theory, states that for polynomials in one variable of degrees higher or equal to 5, there is in general no way to express the value of the roots using only arithmetic operations ($+$, $-$, $\times$, $/$) and radicals ($\sqrt[n]{}$). Thus, we cannot hope to obtain in general a simple algorithm expressing $p'_i$ as a function of $p_1, \dots, p_n$ using tests, arithmetic operations ($+$, $-$, $\times$, $/$) and radicals ($\sqrt[n]{}$).

Let us now assume that there are no precondition parameters $p_1,\dots,p_n$ or, equivalently, that we know exactly their value. Would it be at least possible to compute the values of the $p'_i$?

$p'_i$ is defined as the only solution of a logical formula with a single free variable, built using polynomial arithmetic. By putting this formula into DNF, we can reduce the problem to computing the only solution, if any, of a conjunction of polynomial inequalities and equalities. Such a solution can be computed to arbitrary precision\,; in fact, for any $\epsilon$, one can obtain bounds $[l,h]$ such that $l \leq p'_i \leq h$ and $h-l \leq \epsilon$. Unfortunately, the cost of such computations is high.~\cite{Basu_Pollack_Roy_algoreal_2006}

\subsubsection{Experiments}
Consider $l_x \leq x \leq h_x$, $l_y \leq y \leq h_y$ and the problem of generating the optimal abstract transfer function for the multiplication operation, $z := x * y$. We wish to obtain $l_z$ and $h_z$ such that $l_z \leq z \leq h_z$, and $l_z$ and $h_z$ are optimal ($l_z$ is maximal, $h_z$ is minimal). We first define the set of admissible (not necessarily minimal)~$h_z$:
\begin{equation}
A \definedAs \forall x \forall y (l_x \leq x\leq h_x \land l_y \leq y\leq h_y
  \Rightarrow x y\leq h_z)
\end{equation}

Now we define the \emph{least} value for $h_z$:
\begin{equation}
O \definedAs A \land \forall h~(A[h/h_z] \Rightarrow h \geq h_z)
\end{equation}
The free variables of this formula are $l_x$, $h_x$, $l_y$, $h_y$ and~$h_z$.

\begin{figure}
{\small
\begin{multline}
\left(l_y<0\land
  \left(\left(h_y=l_y\land h_x\geq \
    \frac{l_x l_y}{h_y}\land h_z=l_x \
    l_y\right) \right.\right.\\\left.\left.
  \lor (l_y<h_y\leq 0\land ((l_x\leq 0\land \
    h_x\geq l_x\land h_z=l_x l_y)
  \lor \
     (l_x>0\land h_x\geq l_x\land h_z=h_y \
     l_x))) \right.\right. \\ \left.\left.
  \lor \left(h_y>0\land \left(\left(l_x<0\land \
     \left(\left(l_x\leq h_x\leq \frac{l_x \
       l_y}{h_y}\land h_z=l_x l_y\right)\lor \
     \left(h_x>\frac{l_x l_y}{h_y}\land \
     h_z=h_x h_y\right)\right)\right) 
     \right.\right.\right.\right. \\ \left.\left.\left.\left.
  \lor (l_x=0\land \
     h_x\geq 0\land h_z=h_x h_y)\lor (l_x>0\land \
     ((h_x=l_x\land h_z=h_y l_x) \
     \right.\right.\right.\right. \\ \left.\left.\left.\left.
     \lor (h_x>l_x\land h_z=h_x \
     h_y)))\right)\right)\right)\right)\\
\lor (l_y=0\land \
((h_y=0\land h_x\geq l_x\land h_z=0)\lor \
(h_y>0\land ((l_x<0\land ((l_x\leq h_x\leq 0\land \
h_z=0) \\
\lor (h_x>0\land h_z=h_x h_y)))\lor
(l_x=0\land h_x\geq 0\land h_z=h_x h_y)\lor \
(l_x>0\land ((h_x=l_x\land h_z=h_y l_x) \\
\lor (h_x>l_x\land h_z=h_x \
h_y)))))))\\
\lor \left(l_y>0\land \
\left(\left(h_y=l_y\land \left(\left(l_x<0\land \
\left(\left(h_x=\frac{l_x l_y}{h_y}\land \
h_z=l_x l_y\right)
\right.\right.\right.\right.\right.\right. \\ \left.\left.\left.\left.\left.\left.
\lor \left(h_x>\frac{l_x \
l_y}{h_y}\land h_z=h_x \
h_y\right)\right)\right)
\lor (l_x=0\land h_x\geq 0\land \
h_z=h_x h_y)\lor
\right.\right.\right.\right. \\ \left.\left.\left.\left.
\left(l_x>0\land \
\left(\left(h_x=\frac{l_x l_y}{h_y}\land \
h_z=l_x l_y\right)\lor \left(h_x>\frac{l_x \
l_y}{h_y}\land h_z=h_x \
h_y\right)\right)\right)\right)\right)
\right.\right. \\ \left.\left.
\lor (h_y>l_y\land \
((l_x<0\land ((h_x=l_x\land h_z=l_x \
l_y)\lor (l_x<h_x\leq 0\land h_z=h_x \
l_y)
\right.\right. \\ \left.\left.
\lor (h_x>0\land h_z=h_x h_y)))\lor \
(l_x=0\land h_x\geq 0\land h_z=h_x h_y) \
\right.\right. \\ \left.\left.
\lor (l_x>0\land ((h_x=l_x\land h_z=h_y \
l_x)
\lor (h_x>l_x\land h_z=h_x \
h_y)))))\right)\right).
\end{multline}
}

\caption{This formula is the result of quantifier elimination. It defines $h_z$ to be the least upper bound of $xy$ for $x \in [l_x,h_x]$ and $y \in [l_y,h_y]$.}
\label{fig:interval_mul}
\end{figure}

Mathematica~7 performs quantified elimination by cylindrical algebraic decomposition on this formula in 4.3~s and yields a large formula (Fig.~\ref{fig:interval_mul}) with many case disjunctions, most notably on the sign of $l_x$, $h_x$, $l_y$, $h_y$. This is quite natural: the monotonicity of the function $y \mapsto xy$ changes according to the sign of~$x$. This function is equivalent to the much more terse
\begin{equation}
h_z = \max(l_xl_y, h_xl_y, l_xh_y, h_xh_y)
\end{equation}

This illustrates the limit of our approach on nonlinear problems: even on simple program constructions and with simple invariants, quantifier elimination takes nonneglible time and outputs complicated formulas. We therefore did not pursue this direction further.

\section{Related work}
Since the first numerical abstract analysis techniques were proposed in the 1970s, there has been considerable work on improving precision, efficiency, or both. Without attempting to be exhaustive, we shall now describe a few of the approaches and how they differ from ours.



\label{part:related}

\subsection{Relational abstract domains and modular analysis}
There is a sizeable amount of literature concerning relational numerical abstract domains; that is, domains that express constraints between numerical variables. Convex polyhedra were proposed in the 1970s~\cite{Halbwachs_PhD,CousotHalbwachs78}, and there have been since then many improvements to the technique; a bibliography was gathered by~\citet{PPL}. Algorithms on polyhedra are costly and thus a variety of domains intermediate between simple interval analysis and convex polyhedra were proposed~\cite{Mine_AST_WCRE01,Clariso_Cortadella_SAS2004,Sankaranarayana+others/05/Scalable}. 

It is possible to use relational abstract domains such as polyhedra to model the input/output relationship of a program, function or block~\citep[\S7.2, p.~112]{Halbwachs_PhD}. Instead of considering only the current values of the program variables $(v_1,\dots,v_n)$ at the various program points, one also considers the initial values $(v^0_1,\dots,v^0_n)$ of these variables at the beginning of the program, function or block; thus the computed polyhedra, for instance, relate $(v^0_1,\dots,v^0_n,v_1,\dots,v_n)$. We employed this approach when dealing with recursive procedures (\S\ref{part:McCarthy91}). Such an approach is modular: one can for instance analyse a procedure in such a way, and plug the result of the analysis at each point of call; though of course one loses optimality. It also provides for modular analysis of loop nests: one first analyses the innermost loop, and then replace this innermost loop by the result of the analysis, considered as a nondeterministic program; one then proceeds to the next innermost loop.

One limit of this approach is that the relationship between input and output is constrained by the abstract domain. Most numerical abstract domains concern \emph{convex} relations: difference bound constraints, octagons, polyhedra etc. are all geometrically convex (given two points $a,b$ in the concretisation, the segment $[a,b]$ is also in the concretisation). Note that the result of the analysis of the absolute value function (\S\ref{part:template_linear_constraints_transformers}), as expressed by Rel.~\ref{eqn:abs_dnf}, or that of the rate limiter (\S\ref{part:rate_limiter}), are piecewise linear but not convex.

The idea of producing procedure summaries \cite{Sharir_Pnueli_81} as formulas mapping input bounds to output bounds is not new. \citet{Rugina_Rinard_05}, in the context of pointer analysis (with pointers considered as a base plus an integer offset), proposed a reduction to linear programming. This reduction step, while sound, introduces an imprecision that is difficult to measure in advance; our method, in contrast, is guaranteed to be ``optimal'' in a certain sense. \citeauthor{Rugina_Rinard_05}'s method, however, allows some nonlinear constructs in the program to be analysed. \citet{DBLP:conf/cc/MartinAWF98} proposed applying interprocedural analysis to loops.

\citet{Seidl_ESOP07} also produce procedure summaries as numerical constraints. Our procedure summaries are implementations of the corresponding abstract transformer over some abstract domain, while theirs outputs a relationship between input and output concrete values. Their analysis considers a \emph{convex} set of concrete input-output relationships, expressed as \emph{simplices}, a restricted class of convex polyhedra. This restriction trades precision for speed: the generator and constraint representations of simplices have approximately the same size, while in general polyhedra exponential blowup can occur. Tests by arbitrary linear constraints cannot be adequately represented within this framework. \citet[Sec.~4]{Seidl_ESOP07} propose deferring those constraints using auxiliary variables; this, however, loses some precision. Their analysis and ours are therefore incomparable, since they make different choices between precision and efficiency.

\citet{Reps_CAV05} proposed an interprocedural analysis of numerical properties of functions using weighted pushdown automata. The ``weights'' are taken in a finite height abstract domain, while the domains we consider have infinite height.

\subsection{Computations of exact fixed points}
The limitations of the widening approach explained in \S\ref{part:absint} have been recognized for long. There has therefore been extensive research about computing precise inductive invariants, if possible the least inductive invariant inside the abstract domain considered.

Several methods have been proposed to synthesize invariants without using widening operators~\cite{Colon_CAV03,Cousot05-VMCAI,Sankaranarayanan_SAS04}. In common with us, they express as constraints the conditions under which some parametric invariant shape truly is an invariant, then they use some resolution or simplification technique over those constraints. Again, these methods are designed for solving the problem for one given set of constraints on the inputs, as opposed to finding a relation between the output or fixed-point constraints and the input constraints. In some cases, the invariant may also not be minimal.

\citet{DBLP:conf/sas/BagnaraHMZ05,DBLP:journals/scp/BagnaraHRZ05} proposed improvements over the ``classical'' widenings on linear constraint domains~\cite{Halbwachs_PhD}. \citet{DBLP:conf/cav/GopanR06} introduced ``lookahead widenings'': standard widening-based analysis is applied to a sequence of syntactic restrictions of the original program, which ultimately converges to the whole program; the idea is to distinguish phases or modes of operation in order to make the widening more precise.
\citet{DBLP:conf/sas/GonnordH06,Gonnord_PhD,LEROUX-SUTRE-SAS2007} have proposed \emph{acceleration} techniques: when the transition relation $\tau$ is of certain particular forms, it is possible to compute its transitive closure $\tau^+$ exactly or with small imprecision. Typically, acceleration techniques have difficulties dealing with programs where the control flow is not \emph{flat}, for instance when there are paths through a loop body that affect the iteration variables in different ways, such as the circular buffer example (Listing~\ref{ex:circularbuffer}).

\citet{arXiv:0806.1160,GGTZ:07,DBLP:conf/cav/CostanGGMP05} proposed a ``policy iteration'' or ``strategy iteration'' approach,%
\footnote{The terminology comes from game theory. In broad terms, their consider equations with max/min operators, which are similar to the ``minimax'' operators appearing in the definition of the value of games in game theory. Choosing which argument of a ``min'' is used corresponds, in game theory terms, of choosing a \emph{strategy} or \emph{policy} for a player that tries to minimise the value of the game.}
by downwards iterations providing successive over-ap\-pro\-xi\-ma\-tions of the least fixed point. Their approach can fail to converge to the least fixed point, for instance with expansive semantics such as those of the ``circular buffer'' example (Listing~\ref{ex:circularbuffer}), though for some classes of programs it converges to the least fixed point~\citep{arXiv:0806.1160}.

\citet{Gawlitza_Seidl_ESOP07} proposed another policy iteration approach, which is guaranteed to provide the least fixed point of the system of abstract equations. In contrast to the above method, they use upwards iterations, so each value computed is an under-approximation of the abstract least fixed point. They extended that approach to template linear constraint domains \citep{DBLP:conf/csl/GawlitzaS07}. The differences with our approach are twofold:
\begin{itemize}
\item Their approach computes the least fixed point of a system of min/max abstract equations, derived from the source code of the program. In intuitive terms, min's correspond to conditions (and closures operators in relational domains), and max's to ``merge points'' in the control flow graph (end of if-then-else). This approach thus incurs the same problem of ``undistinguished paths'' as the example from \S\ref{part:absint}. Even then, the policy iteration algorithm may iterate across a number of iterations exponential in the number of merge points.

An alternative would be to consider a cut-set for the control flow graph and distinguish each path between two points in the cut-set. in other words, for a single loop, one would consider each individual control path inside the loop body. The number of such paths is exponential in the number of tests, and thus the ``max'' operation in the abstract equations would also have an exponential number of arguments. Such an approach would compute the same result as ours; however, it would be unworkable without further work to get rid of the explicitly exponential number of arguments in the ``max'' operation.

\item Their approach needs all preconditions exactly known. In contrast, we compute it as an explicit function of the precondition. In short, our invariants are \emph{parametric} in the precondition, while theirs are not.
\end{itemize}

\citet{Gulwani_PLDI08} have also proposed a method for generating linear invariants over integer variables, using a class of templates. The methods described in the present article can be applied to linear invariants over integer variables in two ways: either by abstracting them using rationals (as in examples in Sec.~\ref{part:loop_example}, \ref{part:noop_nests}), either by replacing quantifier elimination over rational linear arithmetic by quantifier elimination over linear integer arithmetic, also known as Presburger arithmetic (\S\ref{part:integers}). \citeauthor{Gulwani_PLDI08} instead chose to first consider integer variables as rationals, so as to be able to compute over rational convex polyhedra, then bound variables and constraint parameters so as to model them as finite bit vectors, finally obtaining a problem amenable to SAT~solving. Program variables \emph{are} finite bit vectors in most industrial programming languages, and parameters to useful invariants over integer variables are often small, thus their approach seems justified. We do not see, however, how their method could be applied to programs operating over real or floating-point variables, which are the main motivation for the present article.

\subsection{Limitations of template-based approaches}
Much, if not all, of the published results on computing least inductive invariants in abstract domains, or at least abstract fixed points, deal with template domains \cite{GGTZ:07,DBLP:conf/csl/GawlitzaS07,Gulwani_PLDI08}, including intervals~\cite{DBLP:conf/cav/CostanGGMP05,Gawlitza_Seidl_ESOP07}. The fundamental reasons for this are:
\begin{itemize}
\item The domain of convex polyhedra is not closed under infinite intersection. Thus, in general, there is no best abstraction of a set of states. In general, there is no least inductive invariant inside the domain.
\item These methods replace the problem of dealing with arbitrarily complex shapes such as convex polyhedra by dealing with a vector of real numbers in finite, fixed dimension. The coefficients of these vectors are then amenable to a variety of constraint solving techniques.
\end{itemize}

A common criticism of template-based approaches, is that they suppose that one knows the interesting templates beforehands --- in the case of linear constraint domains, interesting directions in space. In contrast, methods based on general convex polyhedra infer these directions themselves, through the convex hull and widening operations~\citep{CousotHalbwachs78,Halbwachs_PhD}. For instance, an analysis using standard polyhedra of the following program will infer that $2x=y$, while a template based approach would succeed in doing so only if a template of the form $2x-y$ has been provided:
\lstset{language=Cext}
\begin{lstlisting}
x = y = 0;
while (true) {
  x = x+1;
  y = y+2;
}
\end{lstlisting}

We understand and share that criticism. In some cases there exist ``natural'' templates, such as intervals, or when dealing with timing or scheduling constraints, difference bounds $v_i - v_j \leq C$, but in general, finding the correct templates seems a hard problem. A suggestion is to run iterations using general convex polyhedra and look at the stable directions of the faces of these polyhedra.

We think however that criticism of template domains should be part of wider considerations on how to choose the proper abstract domain.
Abstract interpretation essentially replaces the unsolvable problem of computing the least inductive invariant of the concrete problem by the solvable problem of computing an inductive invariant in an abstract domain --- in some lucky cases such as the one dealt with in this article, actually obtaining the least inductive invariant \emph{in the abstract domain}. The choice of the abstract domain is at present somewhat arbitrary, typically hardwired into the analysis tool, at best chosen by some command-line flags.

Convex polyhedra \citep {Halbwachs_PhD,CousotHalbwachs78,halbwachs94b} are popular because many programming idioms naturally exhibit convexity --- for instance, the set of loop indices $(i,j,k,\dots)$ of nested loops occurring in numerical analysis programs is often convex. Yet, one can easily think of programs where some interesting properties are not convex (a test $|x| \geq 1$, for instance). There are some cases where it is important not to enforce convexity and instead implement disjunctive domains, capable of representing properties such as $x \leq -1 \vee x \geq 1$; an example is \emph{trace partitioning}~\cite{Rival_Mauborgne_TOPLAS07}. Thus, a static analysis tool based on general convex polyhedra also enforces an \emph{a priori} convex shape that might not be representative of the useful program invariants.

While convex polyhedra are not enough, they are often ``too much'': their complexity is too high for many applications. Indeed, even for less costly constraint domains such as octagons \citep{mine:hosc06,Mine_AST_WCRE01}, it is simply too expensive to compute constraints between all visible program variables, so some analysers choose \emph{a priori} to consider relations only between certain variables --- an approach knowing as \emph{packing} in the Astr\'ee tool~\citep{Monniaux_ASIAN06,ASTREE_ESOP05}. Again, the packing choice is a form of \emph{a priori} template guess made by the analyser, using heuristics that look at the way the program is organized.

Further research is obviously needed in how to choose and adapt abstract domains so that they can represent interesting inductive properties at reasonable costs. This problem is similar to the problem of finding the correct predicates in predicate abstraction. For this, various methods for finding predicates in addition to those syntactically present in the program have been proposed, especially those based on the analysis of spurious counterexamples (\emph{counterexample-guided abstraction refinement} or CEGAR).\footnote{The literature on CEGAR is too vast to be cited here without unfairness. \citet{Clarke_et_al_JACM03} introduced this approach for symbolic model checking. Notable applications to software model checking include the \textsc{Blast} \citep{BLAST_STTT07} and \textsc{Slam} \citep{SLAM_PLDI01} tools.} Perhaps similar ideas could be employed for suggesting suitable additional numerical templates, finer-grained packing or disjunctions.

\subsection{Relational domains beyond polyhedra}
In earlier works, we have proposed a method for obtaining input-output relationships of digital linear filters with memories, taking into account the effects of floating-point computations~\cite{Monniaux_CAV05}. This method computes an exact relationship between bounds on the input and bounds on the output, without the need for an abstract domain for expressing the local invariant; as such, for this class of problems, it is more precise than the method from this article. This technique, however, cannot be easily generalized to cases where the operator block contains tests and other nonlinear constructs; the semantics of nonlinear constructs must be approximated by e.g. interval analysis.

There have been several published approaches to finding nonlinear relationships between program variables. One approach obtains polynomial equalities through computations on ideals using Gr\"obner bases~\citep{Rodriguez_Kapur_SAS2004,1222597}. This work only deals with equalities (not inequalities), uses a classical approach of computing output constraints from a set of input constraints (instead of finding relationships between the two sets of constraints), and deals with loops using a widening operator. In comparison, our approach abstracts whole program fragments, and is modular --- it is possible to ``plug'' the result of the analysis of a procedure at the location of a procedure call, though of course this is less precise than inlining the procedure.

Since nonlinear relations are notoriously costly to compute upon, \citet{DBLP:conf/sas/BagnaraRZ05} have proposed using further abstraction to be able to reduce the problem to computations over convex polyhedra.

\citet{Kapur_ACA04} also proposed to use quantifier elimination to obtain invariants: he considers program invariants with parameters, and derives constraints over those parameters from the program.
Our work improves on his by noting that least invariants of the chosen shape can be obtained, not just any invariant; that the abstraction can be done modularly and compositionally (a program fragment can be analysed, and the result of its analysis can be plugged into the analysis of a larger program), or combined into a ``conventional'' abstract interpretation framework (by using invariants of a shape compatible with that framework), and that the resulting invariants can be ``projected'' to obtain numerical quantities.

\section{Conclusion and future prospects}
\label{part:conclusion}
Writing static analysers by hand has long been found tedious and error-prone. One may of course prove an existing analyser correct through assisted proof techniques, which removes the possibility of soundness mistakes, at the expense of much increased tediousness. In this article, we proposed instead effective methods to synthesize abstract domains by automatic techniques. The advantages are twofold: new domains can be created much more easily, since no programming is involved; a single procedure, testable on independent examples, needs be written and possibly formally proved correct. To our knowledge, this is the first effective proposal for generating numerical abstract domains automatically, and one of the few methods for generating numerical summaries. Also, it is also the only method so far for computing summaries of \emph{floating-point} functions.

We have shown that floating-point computations could be safely abstracted using our method. The formulas produced are however fairly complex in this case, and we suspect that further over-approximation could dramatically reduce their size. There is also nowadays significant interest in automatizing, at least partially, the tedious proofs that computer arithmetic experts do and we think that the kind of methods described in this article could help in that respect.

We have so far experimented with small examples, because the original goal of this work was the automatic, on-the-fly, synthesis of abstract transfer functions for small sequences of code that could be more precise than the usual composition of abstract of individual instructions, and less tedious for the analysis designer than the method of pattern-matching the code for ``known'' operators with known mathematical properties. A further goal is the precise analysis of longer sequences, including integer and Boolean computations. We have shown in Sec.~\ref{part:partitioning} how it was possible to partition the state space and abstract each region of the state-space separately; but naive partitioning according to $n$ Booleans leads to $2^n$ regions, which can be unbearably costly and is unneeded in most cases. We think that automatic refinement and partitioning techniques \cite{jeannet03a} could be developed in that respect.

The main practical application that we envision is to be able to analyse numerical operator blocks from synchronous programming languages such as \textsc{Simulink},%
\footnote{\textsc{Simulink} is a graphical dataflow modeling tool sold as an extension to the \textsc{Matlab} numerical computation package. It allows modeling a physical or electrical environment along the computerized control system. A code generator tool can then provide executable code for the control system for a variety of targets, including generic~C. \textsc{Simulink} is available from \href{http://www.mathworks.com/}{The Mathworks}.}
\textsc{Scicos},%
\footnote{\href{http://www.scicos.org/}{\textsc{Scicos}} is a graphical dataflow modeling tool coming with the \textsc{Scilab} numerical computation package, similar in use to \textsc{Simulink}.~\citep{Scilab_Scicos} It is available from INRIA under the GNU General Public License and also has code generation capabilities.}
\textsc{Lustre},%
\footnote{\textsc{Lustre} is a synchronous programming language, from which code can be generated for a variety of platforms \citep{LUSTRE_POPL87}.}
\textsc{Scade}%
\footnote{\textsc{Scade} is a graphical synchronous programming language derived from \textsc{Lustre}. It is available from \href{http://www.esterel-technologies.com/}{Esterel Technologies}. It was used for implementing parts of the Airbus A380 fly-by-wire systems, among others. \cite{DBLP:conf/safecomp/SouyrisD07,DBLP:conf/sas/DelmasS07}}
or \textsc{Sao},%
\footnote{\textsc{Sao} is an earlier industrial graphical synchronous programming language, used, for implementing parts of the Airbus A340 fly-by-wire systems \citep{Briere_Traverse_FTCS-23}, among others.}
which are widely used for programming control systems \citep{Astrom_Wittenmark_97}, particularly in the automative and avionic industries. In order to obtain good analysis precision, such blocks often have to be analysed as a whole instead of decomposing them into individual components and applying individual transfer functions, as in our rate limiter example. The static analysis tool \textsc{Astr\'ee} \citep{Monniaux_ASIAN06,ASTREE_TASE07,ASTREE_ESOP05,ASTREE_PLDI03,DBLP:conf/safecomp/SouyrisD07,DBLP:conf/sas/DelmasS07} outputs few, if any, false alarms on some classes of control programs because it has specific specialized transfer functions for certain operator blocks or coding patterns. Such transfer functions had to be implemented by hand; the techniques described in the present article could have been used to implement some of them automatically and even on-the-fly.

There are two important drawbacks to our method, which make it currently only useful for very precise analysis of small parts of programs. The first is that we need to ``see'' the whole of the loop or function that we are analysing, the instructions of which must belong to the class of constructs that we are capable of dealing with, or at least can be abstracted by them. In contrast, iterative techniques are more tolerant: they see the program state locally, at each program point, and the numerical analysis may easily interact with other analyses, such as pointers~\citep{ASTREE_PLDI03,BlanchetCousotEtAl02-NJ}. The second issue is the high cost of quantifier elimination. Despite our work on new algorithms \citep{Monniaux_LPAR08}, in which we are still making progress, scalability remains an issue.

\section*{Acknowledgements}
We would like to thanks the anonymous referees for careful reading.



\newcommand{\doix}[1]{\href{http://dx.doi.org/#1}{#1}\endgroup}
\newcommand{\doi}{\begingroup\footnotesize doi: \catcode`\_=13\def\_{\textunderscore}\doix}
\newcommand{\isbn}[1]{{\footnotesize \href{http://www.worldcat.org/isbn/#1}{ISBN #1}}}
\newcommand{\issn}[1]{{\footnotesize \href{http://www.worldcat.org/issn/#1}{ISSN #1}}}
\renewcommand{\UrlFont}{\footnotesize\tt}

\bibliographystyle{dmplainnat}
\bibliography{automatic_modular_exact_invariants_lmcs}
\end{document}